\newtheorem{definition}{Definition}
\newtheorem{theorem}{Theorem}
\newtheorem{lemma}{Lemma}
\newtheorem{corollary}{Corollary}
\newtheorem{proposition}{Proposition}
\def\ScaleIfNeeded{%
\ifdim\Gin@nat@width>\linewidth \linewidth \else \Gin@nat@width
\fi } \makeatother
\begin{document}

\title{ Multi-Agent Reinforcement Learning Based Resource Allocation  for  UAV  Networks}
\vspace{-1.9em}
\author{
Jingjing~Cui, ~\IEEEmembership{Member,~IEEE,}
Yuanwei~Liu,~\IEEEmembership{Member,~IEEE,}
Arumugam~Nallanathan,~\IEEEmembership{Fellow,~IEEE,}

\thanks{J.cui, Y. Liu and A. Nallanathan are with the School of Electronic Engineering and Computer Science, Queen Mary University of London, London E1 4NS, U.K. (email: \{j.cui, yuanwei.liu, a.nallanathan\}@qmul.ac.uk).}

\thanks{}  }

\maketitle
\vspace{-2.9em}
\begin{abstract}
Unmanned aerial vehicles (UAVs) are capable of serving as aerial base stations (BSs) for providing both  cost-effective and on-demand wireless communications. This article investigates  dynamic resource allocation  of multiple UAVs   enabled communication networks  with the goal of maximizing  long-term rewards. More particularly,  each UAV   communicates with a ground user by automatically selecting its communicating  users, power levels and  subchannels without any information exchange among  UAVs. 
To model the uncertainty of environments, we formulate the  long-term resource allocation problem  as a stochastic game for maximizing the expected rewards, where each UAV becomes a learning agent and each resource allocation solution corresponds to an action taken by  the UAVs.  Afterwards, we develop a multi-agent reinforcement learning (MARL) framework that each agent  discovers its best strategy according to its local observations using learning. More specifically,  we propose  an agent-independent method,  for which all agents conduct a decision algorithm independently but share a common structure based on Q-learning.  
Finally, simulation results reveal that: 1) appropriate parameters for  exploitation and exploration are capable of enhancing the performance of the proposed MARL based resource allocation algorithm; 2) the proposed MARL algorithm provides acceptable performance compared to the case with complete information exchanges among UAVs. By doing so,  it   strikes  a good tradeoff  between performance gains and  information exchange overheads.

\end{abstract}
\begin{IEEEkeywords}
 Dynamic resource allocation, multi-agent reinforcement learning (MARL), stochastic games, UAV communications
\end{IEEEkeywords}

\section{Introduction}

Aerial communication networks, encouraging new innovative functions to deploy wireless infrastructure, have recently  attracted increasing interests  for providing high network capacity and enhancing coverage  \cite{Bucaille13Milcom}.  Unmanned aerial vehicles (UAVs), also known  as remotely piloted aircraft systems (RPAS) or drones, are small pilotless
aircraft that  are  rapidly deployable for complementing  terrestrial communications based on  the 3rd Generation Partnership Project (3GPP) LTE-A (Long term evolution-advanced)~\cite{Chandrasekharan16Mag}. In contrast to  channel characteristics of  terrestrial communications, the channels of UAV-to-ground communications   
are more probably line-of-sight (LoS) links \cite{Hourani14WCL}, which is beneficial for wireless communications.   

In particular, UAVs based different aerial platforms that for providing wireless services have  attracted  extensive research and industry efforts  in terms of the issues of deployment, navigation and control \cite{Mozaffari18Tutorial}.   Nevertheless, resource allocation such as transmit power, serving users and subchannels, as a key communication problem, is also essential to further enhance the energy-efficiency and coverage for UAV-enabled communication networks.  

\subsection{Prior Works}

Compared to terrestrial BSs, UAVs are generally faster  to deploy and  more flexible to configure. The deployment of UAVs in terms of  altitude and distance between UAVs  was investigated for UAV-enabled small cells in \cite{Mozaffari15aDBLP}.  In  \cite{Mozaffari16CL}, a three-dimensional (3D) deployment algorithm based on circle packing is developed for maximizing the downlink coverage performance. Additionaly, a 3D deployment algorithm of a single UAV is developed for maximizing the number of covered users in  \cite{Alzenad17WCL}.  Moreover, by fixing the altitudes, a successive UAV placement approach was proposed to minimize the number of UAVs required while guaranteeing each ground user to be covered by at least one UAV in \cite{Lyu17CL}.  
 
Despite the deployment optimization of UAVs,  trajectory designs of UAVs for optimizing the communication performance have  attracted tremendous  attentions, such as in \cite{Zeng16TC,Zeng18TWC, Wu18TWC}. In \cite{Zeng16TC}, the authors considered one UAV as a mobile relay and investigated the throughput maximization problem by optimizing power allocation and the UAV's trajectory. Then, a designing approach of the UAV's trajectory based on successive convex approximation (SCA) techniques was  proposed in \cite{Zeng16TC}. By transforming the continuous trajectory into a set of discrete waypoints, the authors in \cite{Zeng18TWC} investigated  the UAV's trajectory design with minimizing the mission completion time in a  UAV-enabled multicasting system. Additionally, multiple-UAV  enabled wireless communication networks (multi-UAV networks) were considered in \cite{Wu18TWC}, where a joint design for optimizing trajectory and  resource allocation was studied with the goal of guaranteeing fairness by maximizing the minimum throughput among users.  In \cite{Zhang2018arXiv}, the authors proposed a joint of subchannel assignment and trajectory  design approach  to  strike a tardeoff between the sum rate and the delay of sensing tasks  for a multi-UAV aided uplink single cell network.

Due to the versatility and manoeuvrability of UAVs, human intervention becomes restricted for UAVs' control design. Therefore,  machine learning based intelligent control of UAVs is desired for enhancing the performance for UAV-enabled communication networks. Neuro network based trajectory designs were considered from the perspective of UAVs' manufactured structures    in \cite{geiger2009neural} and \cite{Nodland11Cof}. Regarding UAVs enabled communication networks, a weighted expectation based predictive on-demand deployment approach of UAVs was proposed to minimize the transmit power  in \cite{ZhangQ2018Glb}, where  Gaussian mixture model was used for building  data distributions.  In \cite{Challita18Janarxiv}, the authors studied the autonomous path planning  of UAVs by jointly taking energy efficiency, lantency and interference into consideration, in which a echo state network based deep reinforement learning algorithm was proposed.     In \cite{Chen17Glb}, the authors proposed a liquid state machine (LSM) based resource allocation algorithm for cache enabled UAVs  over LTE licensed and unlicensed bands. Additionally, a log-linear learning based joint channel-slot selection algorithm was developed for multi-UAV networks in \cite{Chen18Ac}.

\subsection{Motivation and Contributions}

As discussed above, machine learning is a promising and power tool to provide    autonomous and effective solutions in an intelligent manner to enhance the UAV-enabled communication networks. However, most research contributions  focus on the  deployment and trajectory designs of UAVs in communication networks, such as \cite{ZhangQ2018Glb,Challita18Janarxiv,Chen17Glb}. Though resource allocation schemes such as transmit power and subchannels were considered  for UAV-enabled communication networks in \cite{Wu18TWC} and \cite{Zhang2018arXiv}, the prior studies focused on time-independent scenarios.  That is the optimization design is independent for each time slot. Moreover,  for time-dependent scenarios,  \cite{Chen17Glb} and \cite{Chen18Ac} investigated the potentials of machine learning based  resource allocation algorithms. However, most of the proposed machine learning algorithms mainly focused on single UAV scenarios or multi-UAV scenarios by assuming the availability of complete network information for each UAV.   In practice, it is non-trivial  to obtain perfect knowledge of dynamic environments due to the high movement speed of UAVs \cite{Sun13Milcom,Cai13TVT}, which imposes formidable challenges on the design of reliable UAV-enabled wireless communications. Besides, most  existing research contributions focus on centralized approaches, which makes  modeling and computational tasks become challenging as the network size continues to increase. Multi-agent reinforcement learning (MARL)  is capable of providing a distributed perspective on the intelligent resource management for UAV-enabled communication networks especially   when  these UAVs only have individual local information.

The main benefits of MARL  are: 1) agents consider individual application-specific nature and environment; 2) local exchanges between agents can be modeled and investigated; 3) difficulties in modelling and computation can be handled in distributed manners.  The applications of  MARL  for  cognitive radio networks were studied in  \cite{Li09Cof} and \cite{Serrano10TVT}. Specifically, in  \cite{Li09Cof},  the authors focused on the feasibilities of MARL based  channel selection algorithms for a specific  scenario with two secondary users. A real-time aggregated interference scheme based on MARL was investigated in \cite{Serrano10TVT} for wireless regional area networks (WRANs). Moreover, in \cite{Asheralieva16TC}, the authors proposed a MARL based channel and power level selection algorithm  for device-to-device (D2D) pairs in heterogeneous cellular networks. The potential of  machine learning based user clustering for mmWave-NOMA networks  was presented in \cite{Cui18NOMAML}. Therefore, invoking MARL to UAV-enabled communication networks provides a promising solution for intelligent resource management.  
Due to the high mobility and adaptive altitude, to the best of our knowledge, multi-UAV networks are not well-investigated, especially for the resource allocation from the perspective of MARL. However, it is challenging  for MARL based multi-UAV networks to specify   a suitable objective  and strike a exploration-exploitation tradeoff.  

Motivated by the features of MARL and  UAVs, this article aims to develop a MARL framework for multi-UAV networks. More specifically, we consider a multi-UAV enabled downlink wireless network, in which multiple UAVs try to communicate with ground users simultaneously. Each UAV  flies according to the predefined trajectory. It is assumed  that  all UAVs communicate with ground users without the assistance of a central controller.  Hence,  each UAV can only observe its local information. Based on the proposed framework, our major contributions are summarized as follows:
\begin{enumerate}

\item  We investigate the optimization problem of maximizing long-term rewards of multi-UAV downlink networks by jointly designing user, power level and subchannel selection strategies.  Specifically,  we formulate a quality of service (QoS) constrained  energy efficiency function as the reward function for providing a reliable communication. Because of the time-dependent  nature and environment uncertainties, the formulated optimization problem  is non-trivial.  To solve the challenging problem, we propose a learning based dynamic resource allocation algorithm.

\item We propose  a novel framework based on stochastic game  theory \cite{nowe2012game}  to model the dynamic resource allocation problem of   multi-UAV networks, in which  each UAV becomes a learning agent and each resource allocation solution corresponds to an action taken by  the UAVs. Particularly, in the formulated stochastic game, the actions for each UAV satisfy the properties of Markov chain \cite{neyman2003markov}, that is the reward of a UAV is only dependant on the current state and action. Furthermore, this framework can  be also applied to model the resource allocation problem for a wide range of dynamic multi-UAV systems.

\item We develop a MARL based resource allocation algorithm for solving the formulated stochastic game  of multi-UAV networks. Specifically,  
each UAV as an independent learning agent runs a standard Q-learning algorithm by ignoring the other UAVs, and hence information exchanges   between UAVs and computational burdens on each UAV are substantially reduced.  Additionally, we also provide a convergence proof of the proposed MARL based resource allocation algorithm.

\item Simulation results are provided to derive parameters for  exploitation and exploration  in the $\epsilon$-greedy method over different network setups. Moreover, simulation results also demonstrate that the proposed MARL based resource allocation framework for multi-UAV networks strikes a good tradeoff  between performance gains and  information exchange overheads.

\end{enumerate}

\subsection{Organization}
The rest of this article is organized as follows. In Section II, the system model for downlink multi-UAV  networks  is presented. The problem of resource allocation is formulated and a stochastic game framework for the considered  multi-UAV  network  is presented in Section III. In Section IV, a Q-learning based MARL algorithm for resource allocation is designed. Simulation results are presented in Section V, which is followed by the conclusions in Section VI.

\section{System Model}
\begin{figure} [t!]
\centering
\includegraphics[width= 3.5in, height=2.6in]{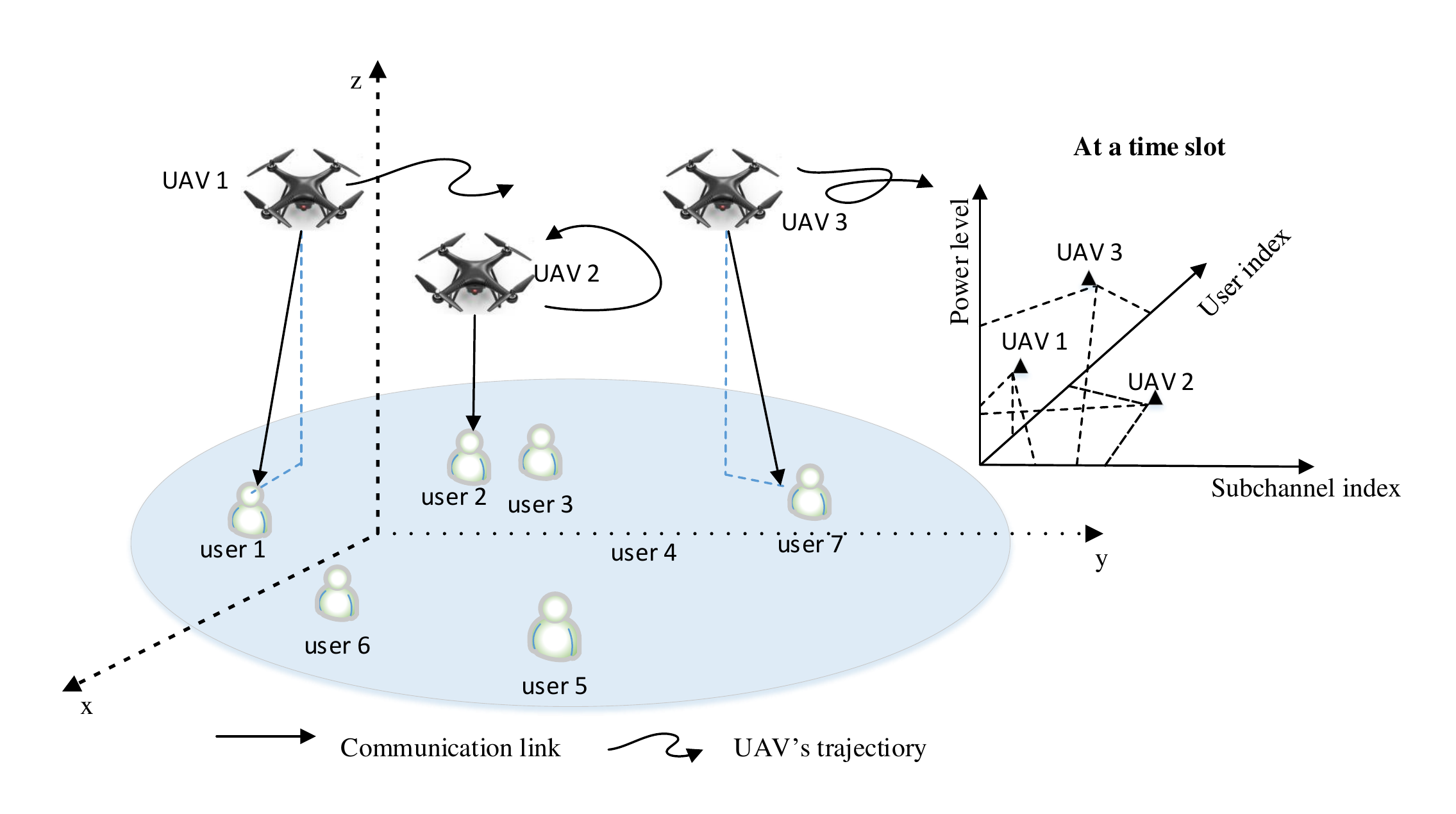}
  \vspace*{-0.5em}\caption{  Illustration of multi-UAV communication networks.}
 \label{UAV_system_model}
   \vspace{-1.0em}
\end{figure}
Consider a multi-UAV downlink communication network as illustrated in Fig. \ref{UAV_system_model} operating in a discrete-time axis, which consists of  $M$ single-antenna UAVs and $L$  single-antenna users, denoted by  $\mathcal{M}=\{1,\cdots, M\}$ and $\mathcal{L}=\{1,\cdots,L\}$, respectively. The ground users are randomly distributed in the considered disk with radius $r_d$. As shown in  Fig. \ref{UAV_system_model}, multiple UAVs fly over this region and  communicate with ground users by providing direct communication connectivity  from the sky \cite{Bucaille13Milcom}.  The total bandwidth $W$ that the UAVs can operate  is divided into $K$ orthogonal subchannels, denoted by $\mathcal{K}=\{1,\cdots,K\}$. Note that the subchannels occupied by UAVs may  overlap with each other.  Moreover, it is assumed that UAVs fly autonomously without human intervention based on pre-programmed flight plans as in \cite{how2004flight}. That is the trajectories of UAVs are predefined based on the pre-programmed flight plans.  As shown in Fig. \ref{UAV_system_model}, there are three UAVs flying on the considered region based on the pre-defined trajectories, respectively. This article focuses on the dynamic design of  resource allocation for multi-UAV networks in term of user, power level and  subchannel selections.   
Additionally, assuming that all UAVs communicate without the  assistance of a central controller and  have no global knowledge of  wireless communication environments. In other words,  the channel state information (CSI) between a   UAV and users are known locally. This assumption is reasonable in practical due to the mobilities of UAVs, which is similar to the research contributions  \cite{Sun13Milcom,Cai13TVT}.

\subsection{UAV-to-Ground Channel Model}

In contrast to the propagation of   terrestrial  communications,   the air-to-ground (A2G) channel is highly dependent on the altitude, elevation angle and the type of the propagation environment  \cite{Hourani14WCL,Mozaffari18Tutorial,Chandrasekharan16Mag}. 
 In this article, we investigate the dynamic resource allocation problem for    multi-UAV networks under two types of UAV-to-ground  channel models:

\subsubsection{Probabilistic Model}
 As discussed in \cite{Hourani14WCL,Chandrasekharan16Mag}, UAV-to-ground communication links can be modeled by a probabilistic path loss model, in which the LoS and non-LoS (NLoS) links can be considered separately with different probabilities of occurrences. According to \cite{Hourani14WCL}, at time slot $t$, the probability of having a LoS connection between UAV $m$ and a ground user $l$ is given by
\begin{align}
P^{\mathrm{LoS}}(t) = \frac{1}{1 + \mathfrak{a}\exp(-\mathfrak{b}\sin^{-1}(\frac{H}{d_{m,l}(t)})- \mathfrak{a})},
\end{align}
where $\mathfrak{a}$ and $\mathfrak{b}$ are constants that depend on the environment. $d_{m,l}$ denotes the distance between UAV $m$ and user $l$ and $H$ denotes the altitude of UAV $m$. Furthermore, the probability of have NLoS links is $P^{\mathrm{NLoS}}(t) = 1 - P^{\mathrm{LoS}}(t)$. 

Accordingly, in time slot $t$, the LoS and NLoS pathloss from UAV $m$ to the ground user $l$ can be expressed as
\begin{subequations}
\begin{align}
PL_{m,l}^{\mathrm{LoS}} &= L^{\mathrm{FS}}_{m,l}(t)+  \eta^{\mathrm{LoS}},\\
PL_{m,l}^{\mathrm{NLoS}} &= L^{\mathrm{FS}}_{m,l}(t) + \eta^{\mathrm{NLoS}},
\end{align}
\end{subequations}
where $L^{\mathrm{FS}}_{m,l}(t)$ denotes the free space pathloss with  $L^{\mathrm{FS}}_{m,l}(t) =20\log(d_{m,l}(t)) + 20 \log(f)+20\log(\frac{4\pi}{c})$, and $f$ is the carrier  frequency.  Furthermore, $\eta^{\mathrm{LoS}}$ and $\eta^{\mathrm{NLoS}}$ are the mean additional  losses for LoS and
NLoS, respectively. Therefore, at time slot $t$, the  average pathloss between UAV $m$ and user $l$ can be expressed as
\begin{align}
L_{m,l}(t) = P^{\mathrm{LoS}}(t) \cdot PL_{m,l}^{\mathrm{LoS}}(t) + P^{\mathrm{NLoS}}(t) \cdot PL_{m,l}^{\mathrm{NLoS}}(t).
\end{align}

\subsubsection{LoS Model}
As discussed in \cite{Zeng16TC}, the LoS model provides  a good approximation for practical UAV-to-ground communications. In the LoS model, the path loss between a UAV and a ground user relies on the locations of the UAV and the ground user as well as the  type of propagation.  
Specifically, under the LoS model, the channel gains between the UAVs and the users follow the free space path loss model, which is determined by the distance between the UAV and the user. Therefore, at time slot $t$, the LoS channel power gain from the $m$-th UAV to the $l$-th ground user can be expressed as
\begin{align}\label{LosCh.eq}
g_{m,l}(t)=\beta_0d_{m,l}^{-\alpha}(t) = \frac{\beta_0}{\big(\|\mathbf{v}_l - \mathbf{u}_m(t)\|^2 + H_m^2\big)^{\frac{\alpha}{2}}},
\end{align}
where $\mathbf{u}_m(t)=(x_m(t),y_m(t))$, and $(x_m(t),y_m(t))$ denotes the location of UAV $m$ in the horizontal dimension at time slot $t$. Correspondingly, $\mathbf{v}_l = (x_l,y_l)$ denotes the location of user $l$.  Furthermore,  $\beta_0$ denotes the channel power gain at the reference distance of $d_0=1$ m, and $\alpha \geq 2$ is the path loss exponent.

\subsection{Signal Model}
In the UAV-to-ground transmission, the interference to each UAV-to-ground user pair is created by other UAVs operating on the same subchannel.   Let $c_m^{k}(t)$ denote the indicator of subchannel, where  $c_m^{k}(t)=1$ if  subchannel $k$ occupied by UAV $m$ at time slot $t$; $c_m^{k}(t)=0$, otherwise.  It satisfies
\begin{align}
\sum_{k\in\mathcal{K}} c_m^{k}(t) \leq 1.
\end{align}
That is each UAV can only occupy a single subchannel for each time slot. 
Let $a_m^l(t)$ be the indicator  of users.  $a_m^l(t) =1$ if user $l$ served by UAV $m$ in time slot $t$; $a_m^l(t)=0$, otherwise.    Therefore, the  observed signal-to-interference-plus-noise ratio (SINR) for a UAV-to-ground user communication between UAV $m$ and user $l$ over  subchannel $k$ at time slot $t$ is given by 
\begin{align}\label{SINR.eq1}
\gamma_{m,l}^{k}(t) =  \frac{G_{m,l}^{k}(t)a_m^l(t)c_m^{k}(t)P_m(t)}{I_{m,l}^{k}(t) + \sigma^2},
\end{align}
where $G_{m,l}^{k}(t)$ denotes the channel gain between  UAV $m$ and user $l$ over  subchannel $k$ at time slot $t$. $P_m(t)$ denotes the transmit power selected by UAV $m$ at time slot $t$. $I_{m,l}^{k}(t)$ is the interference to UAV $m$ with $I_{m,l}^{k}(t) = \sum_{j \in \mathcal{M}, j\neq m} G_{j,l}^{k}(t)c_{m}^{k}(t)P_j(t) $.  Therefore, at any time slot $t$, the SINR for UAV $m$ can be expressed as 
\begin{align}
\gamma_m(t) = \sum_{l \in \mathcal{L}} \sum_{k \in \mathcal{K}} \gamma_{m,l}^{k}(t).
\end{align}

In this article,   discrete transmit power control is adopted at  UAVs \cite{Zheng14TWC}.  The transmit power values by each UAV to communicate with its respective connected user can be expressed as a vector $\mathbf{P}=\{P_1,\cdots,P_J\}$. For each  UAV $m$, we define a binary variable $p_m^j(t)$, $j\in \mathcal{J}=\{1,\cdots,J\}$. $p_m^j(t)=1$, if UAV $m$ selects to transmit at a power level $P_j$ at time slot $t$; and $p_m^j(t)=0$, otherwise. Note that only one power level can be selected at each time slot $t$ by UAV $m$, we have
\begin{align}
\sum_{j \in \mathcal{J}} p_m^j(t) \leq 1, \forall m \in \mathcal{M}.
\end{align}

As a result, we can define a finite set of possible power level selection decisions made by UAV $m$, as follows.
 \begin{align}
 \mathcal{P}_m=\{p_m(t)\in \mathbf{P}|\sum_{j \in \mathcal{J}} p_{m}^j(t) \leq 1 \}, ~\forall m \in \mathcal{M}.
 \end{align}
 Similarly, we also define finite sets of all possible subchannel selection and user selection by UAV $m$, respectively, which are given as follows:
 \begin{align}
 \mathcal{C}_m=&\{c_m(t)\in \mathcal{K}| \sum_{k \in \mathcal{K}} c_{m}^k(t) \leq 1\}, \forall m \in \mathcal{M},\\
  \mathcal{A}_m=&\{a_m(t)\in \mathcal{L}| \sum_{l \in \mathcal{L}} a_{m}^l(t) \leq 1\}, \forall m \in \mathcal{M}.
 \end{align}

To proceed further, we assume that the considered multi-UAV network operates on a discrete-time basis where the time axis is partitioned into equal non-overlapping time intervals  (slots). Furthermore, the communication parameters are assumed to remain constant during each time slot.  
 Let $t$ denote an integer valued time slot index. Particularly,  each UAV holds the CSI of all ground users and decisions for a fixed time interval $T_s \geq 1$ slots, which is called decision period. We consider the following scheduling strategy for the transmissions of UAVs: Any UAV  is assigned a time slot $t$ to start its transmission and  must finish its transmission and select the new strategy or reselect the old strategy by the end of its decision period, i.e., at slot $t+T_s$.  We also assume that the UAVs do not know the accurate duration of their stay in the network. This feature motivates us to design an on-line learning algorithm  for optimizing the long-term  energy-efficiency performance of multi-UAV networks.

\section{Stochastic Game Framework for Multi-UAV Networks}
In this section, we first describe the optimization problem investigated in this article. Then, to model the uncertainty of  stochastic environments,  we formulate the problem of joint user, power level and subchannel selections  by UAVs to be a stochastic game.

\subsection{Problem Formulation}
Note that from \eqref{SINR.eq1} to achieve the maximal throughput, each UAV   transmits at a maximal power level, which, in turn, results in increasing interference to other UAVs.  Hence, to provide reliable communications of UAVs, the main goal of the dynamic design for joint user, power level  and  subchannel selection is to ensure that the  SINRs provided by the UAVs no less than the predefined thresholds. Specifically, the mathematical form can be expressed as
\begin{align}\label{QoS.eq}
\gamma_{m}(t) \geq \bar{\gamma}, \forall m \in \mathcal{M},
\end{align}    
where $\bar{\gamma}$ denotes the targeted QoS threshold of users served by UAVs.  
At  time slot $t$, if the constraint \eqref{QoS.eq} is satisfied, then the UAV obtains a reward $R_{m}(t)$, defined as the difference between the throughput and the cost of power consumption achieved by the selected user,  subchannel and power level. 
Otherwise, it receives a zero reward. Therefore, we can express the reward function $R_m(t)$ of  UAV $m$ at time slot $t$, as follows:
\begin{align}\label{RewardFunc.eq}
R_m(t) = \begin{cases}
\frac{W}{K}\log(1+\gamma_m(t)) - \omega_m P_m(t), & \text{if}~ \gamma_{m}(t) \geq \bar{\gamma}_m,\\
0, & \text{o.w.},
\end{cases}
\end{align}
for all $m \in \mathcal{M}$ and the corresponding immediate reward is denoted as $R_m(t)$. In \eqref{RewardFunc.eq}, $\omega_m$ is  the cost per unit level of power. Note that at any time slot $t$, the instantaneous reward of UAV $m$ in \eqref{RewardFunc.eq} relies on: 1) the observed information:  the individual user, subchannel and power level decisions of UAV $m$, i.e., $a_m(t)$, $c_m(t)$ and $p_m(t)$. In addition, it also relates with the current channel gain $G_{m,l}^k(t)$; 2) unobserved information: the subchannels and power levels selected by other UAVs and the channel gains.  
It should be pointed out that we omitted the fixed power consumption for UAVs, such as the power consumed by controller units and data processing \cite{Uragun11PowCons}.

Next, we  consider to maximize the long-term reward $v_m(t)$ by selecting the served user, subchannel and  transmit power level at each time slot. Particularly, we adopt a future discounted reward \cite{shoham2008multiagent}  as the measurement for each UAV. Specifically,   at a certain time slot of the process, the discounted reward is the sum of its payoff in the present time slot, plus the sum of future rewards discounted by a constant factor.  
Therefore, the considered long-term reward of UAV $m$ is given by
\begin{align}\label{LTReward.eq}
v_m(t)=\sum_{ \tau = 0}^{+\infty}\delta^{\tau} R_{m}(t+\tau+1),
\end{align}
where $\delta$ denotes the discount factor with $0\leq \delta < 1$.    Specifically,  values of $\delta$ reflect the effect of future rewards on the optimal decisions:  if  $\delta$ is close to 0, it means that the decision  emphasizes the near-term gain; By contrast, if  $\delta$ is close to 1, it gives more weights to future rewards and we say the decisions are farsighted.

Next we introduce the set of all possible user, subchannel and power level decisions made by UAV $m$, $m \in \mathcal{M}$, which can be denoted as $\Theta_m =\mathcal{A}_m \otimes \mathcal{C}_m \otimes \mathcal{P}_m$ with $\otimes$ denoting the Cartesian product. Consequently, the objective of each UAV $m$ is to make a selection $\theta_m^*(t) = (a_m^*(t),c_m^*(t), p_m^*(t)) \in \Theta_m$, which maximizes its long-term reward in \eqref{LTReward.eq}.  Hence the optimization problem for  UAV $m$, $m \in \mathcal{M}$, can be formulated as 
\begin{align}\label{Opt1.eq}
\theta_m^*(t) = {\mathrm{arg} \max}_{\theta_m \in \Theta_m} R_m(t).
\end{align}
Note that the optimization design for the considered multi-UAV network consists of $M$ subproblems, which corresponds to $M$ different UAVs. Moreover, each UAV has no information about other UAVs such as their rewards, hence one cannot solve problem \eqref{Opt1.eq} accurately.  
To  solve the optimization problem  \eqref{Opt1.eq} in  stochastic environments, we try to formulate the problem of joint user, subchannel and power level selections by UAVs  to a stochastic non-cooperative game in the following subsection.

\subsection{Stochastic Game Formulation}

In this subsection, we consider to model the formulated  problem \eqref{Opt1.eq} by adopting a  stochastic  game (also called Markov game) framework \cite{nowe2012game}, since it is the generalization of the Markov decision processes to the multi-agent case.  

In the considered network, $M$ UAVs communicate to  users with having no information about  the  operating environment.  It is assumed that all UAVs are selfish and rational. Hence, at any time slot $t$, all UAVs  select their actions non-cooperatively to maximize the long-term rewards in \eqref{Opt1.eq}.  Note that the action for each UAV $m$ is selected from its action space $\Theta_m$. The action conducted by UAV $m$ at time slot $t$, is a triple $\theta_m(t) = (a_m(t),c_m(t), p_m(t)) \in \Theta_m$, where  $a_m(t),~c_m(t)$ and  $p_m(t)$ represent the selected user, subchannel and power level respectively, for UAV $m$ at time slot $t$. For each UAV $m$, denote by $\theta_{-m}(t)$ the actions conducted by the other $M-1$ UAVs at time slot $t$, i.e.,  $\theta_{-m}(t) \in \Theta \setminus \Theta_m$.

As a result, the instantaneous SINR of UAV $m$ at time slot $t$ can be rewritten as 
\begin{eqnarray}\label{SINRF.eq}
\begin{aligned}
 \gamma_m(t)[\theta_m(t), \theta_{-m}(t),\mathbf{G}_m(t)] =  \sum_{l \in \mathcal{L}}\sum_{k \in \mathcal{K}} \frac{S_{m,l}^k(t)[\theta_{m}(t),\theta_{-m}(t), \mathbf{G}_{m,l}(t)] }{I_{m,l}^{k}(t)[\theta_{m}(t),\theta_{-m}(t),\mathbf{G}_{m,l}(t)] + \sigma^2},
\end{aligned}
\end{eqnarray}
where $S_{m,l}^k(t)=G_{m,l}^{k}(t)a_m^l(t)c_m^{k}(t)P_m(t)$, and $I_{m,l}^{k}(t)(\cdot)$ is given in \eqref{SINR.eq1}. Furthermore, $\mathbf{G}_{m,l}(t)$ denotes the matrix of  instantaneous channel responses between UAV $m$ and user $l$ at time slot $t$, which can be expressed as 
\begin{align}
\mathbf{G}_{m,l}(t)=\begin{bmatrix}
G_{1,l}^1(t) & \cdots &  G_{1,l}^K(t)\\
\vdots     & \ddots & \vdots  \\
G_{M,l}^1(t) & \cdots & G_{M,l}^K(t)
\end{bmatrix},
\end{align}  
with $\mathbf{G}_{m,l}(t) \in \mathbb{R}^{M \times K}$, for all $l \in \mathcal{L}$ and $m \in \mathcal{M}$.

At any time slot $t$, each UAV $m$ can measure its current SINR level $\gamma_{m}(t)$. Hence,   the sate  $s_m(t)$ for each UAV $m$, $m \in \mathcal{M}$, is fully observed, which can be defined as 
\begin{align}
s_m(t) = \begin{cases}
1 , & \text{if} ~\gamma_m(t) \geq \bar{\gamma}, \\
0, & \text{o.w.}.
\end{cases}
\end{align}
Let $\mathbf{s}=(s_1,\cdots,s_M)$ be a state vector for all UAVs. In this article, UAV $m$ does not know the states for other UAVs as UAV cannot cooperate with each other. 

We assume that the actions for each UAV satisfy the properties of  Markov chain, that is the reward of a UAV is only dependant on the current  state and action. As discussed in \cite{neyman2003markov}, Markov chain is used to describes the dynamics of the states of a stochastic game where each player has a single action in each state. Specifically, the formal definition of Markov chains is given as follows.
\begin{definition}
 A finite state Markov chain is  a discrete stochastic process, which can be described as follows: Let a finite set of states $ \mathcal{S} = \{s_1, \cdots, s_q\}$ and a $q \times q$ transition matrix $\mathbf{F}$ with each entry $0\leq F_{i,j} \leq 1$ and $\sum_{j=1}^q F_{i,j} = 1$ for any $1 \leq i \leq q$. The process starts in one of the states and moves to another state successively. Assume that the chain is currently in state $s_i$. The probability of  moving to the next state $s_j$  is
\begin{align}
 \Pr\{s(t+1)=s_j|s(t)=s_i\} = F_{i,j},
\end{align}   
which depends only on the present state and not on the previous states and is also called Markov property.  
\end{definition}

Therefore, the reward function of UAV $m$,   $m \in \mathcal{M}$, can be expressed as
\begin{eqnarray} \label{RewardFunc.eq2}
\begin{aligned}
r_m^t &= R_m(\theta_{m}^t,\theta_{-m}^t,s_m^t) \\
      &=s_m^t \Big(C_m^t[\theta_m^t, \theta_{-m}^t,\mathbf{G}_m^t] - \omega_m P_m[\theta_m^t] \Big).
\end{aligned} 
\end{eqnarray}
Here we put the time slot index $t$ in the superscript for notation compactness and  it is adopted in the following of this article for notational simplicity. In \eqref{RewardFunc.eq2}, the instantaneous transmit power is a function of the action $\theta_m^t$ and  the instantaneous rate of UAV $m$ is given by
\begin{eqnarray}
\begin{aligned}
C_m^t(\theta_m^t, \theta_{-m}^t,\mathbf{G}_m^t)
 =\frac{W}{K}\log\Big(1 + \gamma_{m}(\theta_m^t, \theta_{-m}^t,\mathbf{G}_m^t ) \Big),
\end{aligned}
\end{eqnarray}

Notice that from \eqref{RewardFunc.eq2}, at any time slot $t$, the reward $r_m^t$ received by UAV $m$ depends on the current state $s_m^t$, which is fully observed, and partially-observed actions $(\theta_m^t,\theta_{-m}^t)$. At the next time slot $t+1$,  UAV $m$ moves to a new random state $s_{m}^{t+1}$ whose possibilities are only based on the previous state $s_m(t)$ and the selected actions  $(\theta_m^t,\theta_{-m}^t)$.  This procedure repeats for the indefinite number of slots. Specifically, at any time slot $t$,  UAV $m$ can observe its state $s_m^t$ and the corresponding action $\theta_{m}^t$, but it does not know the actions of other players, $\theta_{-m}^t$, and the precise values $\mathbf{G}_{m}^t$.  The state transition probabilities are also unknown to each player UAV $m$. Therefore, the considered UAV system can be formulated as a stochastic game \cite{neyman2003stochastic}. 
\begin{definition}
A stochastic game can be defined as a tuple $\Phi = (\mathcal{S},\mathcal{M},\Theta,F, \mathcal{R})$ where:
\begin{itemize}
\item $\mathcal{S}$ denotes the  state set with $\mathcal{S}=\mathcal{S}_1\times \cdots\times \mathcal{S}_M$, $\mathcal{S}_m \in \{0,1\}$, for all $m\in \mathcal{M}$; 
\item  $M$ is the set of players;
\item $\Theta$ denotes the joint action set and $\Theta_m$ is the action set of player UAV $m$;
\item $F$ is the state transition probability function which depends on the actions of all players. Specifically, $F(s_m^t,\theta,s_m^{t+1})= \Pr\{s_m^{t+1} | s_m^t, \theta\}$, denotes the probability of transitioning to the next state $s_m^{t+1}$ from the state $s_m^t$ by  executing the joint action $
\theta$ with $\theta = \{\theta_1, \cdots,\theta_M\} \in \Theta$;
\item $\mathcal{R}=\{R_1,\cdots,R_M\}$, where $R_m: \Theta \times \mathcal{S} \rightarrow \mathbb{R}$ is a real valued  reward function for player $m$. 
\end{itemize}

\end{definition} 
 
In a stochastic game,  a  mixed strategy  $\pi_m$: $\mathcal{S}_m \rightarrow \Theta_m$,  denoting the mapping from the state set to the action set, is a collection of probability distribution over the available actions. Specifically, for UAV $m$ in the state $s_m$, its mixed strategy  is $\pi_m(s_m) = \{\pi_m(s_m,\theta_m)|\theta_m \in \Theta_m\}$, where each element $\pi_m(s_m,\theta_m)$ of $\pi_m(s_m)$ is the probability with UAV $m$ selecting an action $\theta_m$ in state $s_m$.  A joint strategy  $\pi=\{\pi_1(s_1),\cdots,\pi_{M}(s_M)\}$ is a vector of strategies for $M$ players with one strategy for each player. Let $\pi_{-m} = \{\pi_1,\cdots,\pi_{m-1},\pi_{m+1},\cdots,\pi_{M}(s_M)\}$ denote the same strategy profile but without the strategy $\pi_m$ of player  UAV $m$.  Based on the above discussions, the optimization goal of each player UAV $m$ in the formulated stochastic game is to maximize its expected reward over time. Therefore, for player UAV $m$ under a joint strategy $\pi = (\pi_1,\cdots,\pi_m)$ with assigning a strategy $\pi_i$ to each  UAV $i$, the optimization objective in \eqref{LTReward.eq} can be reformulated as 
\begin{eqnarray}
\begin{aligned}\label{LTReward.eq2}
V_m(s,\pi)= E \bigg\{ \sum_{ \tau = 0}^{+\infty}\delta^{\tau} r_{m}^{t+\tau+1} \mid  s^t=s \bigg\},
\end{aligned}
\end{eqnarray}
where $r_{m}^{t+\tau+1}$ represents the immediate reward received by UAV $m$  at time $t+\tau+1$ and $E\{ \cdot \}$ denotes the expectation operations. In the formulated stochastic game, players (UAVs)  have individual expected  reward  which depends on the joint strategy and not on the individual strategies of the players.  
Hence one cannot simply expect players to  maximize their expected rewards  as it may not be possible for all players to achieve this goal at the same time. Next, we describe a solution for the stochastic game by Nash equilibrium \cite{osborne1994course}. 
\begin{definition}
A Nash equilibrium is a collection of strategies, one for each player, so that each individual strategy  is a best-response to the others.   That is  if a solution $\pi^*=\{\pi_1^*,\cdots,\pi_M^*\}$ is a Nash equilibrium, then  for each UAV $m$, the strategy $\pi_m^*$ such that
\begin{eqnarray}
\begin{aligned}
V_m(\pi_m^*,\pi_{-m}) \geq V_m(\pi_m',\pi_{-m}), ~\forall \pi_m'.
\end{aligned}
\end{eqnarray}
\end{definition} 
It means that in a Nash equilibrium, each UAV's action is the best response to other UAVs' choice. Thus, in a Nash equilibrium solution, no UAV can benefit by changing its strategy as long as all the other UAVs keep their strategies constant. 
Note that the presence of imperfect information in the formulated non-cooperative stochastic  game provides opportunities for the players to learn their optimal strategies through repeated interactions with the stochastic environment. Hence, each player UAV $m$ is regarded as a learning agent whose task is to find a Nash equilibrium strategy for any state $s_m$.  In next section, we propose a multi-agent reinforcement-learning framework for maximizing the sum expected reward in \eqref{LTReward.eq2} with partial observations.

\section{Proposed Multi-Agent Reinforcement-Learning Algorithm}
In this section, we first describe the proposed MARL framework for multi-UAV networks. Then a Q-Learning based resource allocation algorithm  will be proposed for maximizing the expected long-term reward of the considered for multi-UAV network.

\subsection{MARL Framework for  Multi-UAV Networks}
\begin{figure} [t!]
\centering
\includegraphics[width= 3.5in, height=2.0in]{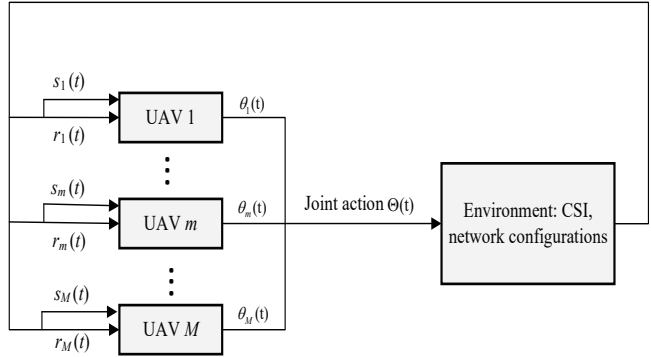}
  \vspace*{-0.5em}\caption{Illustration of MARL framework for multi-UAV networks.}
   \label{Framework-MARL-UAV}
   \vspace{-1.0em}
\end{figure}

Fig. \ref{Framework-MARL-UAV} describes the key components of MARL studied in this article. Specifically, for each UAV $m$, the left-hand side of the box is the locally observed information at time slot $t$--state $s_m^t$ and reward $r_m^t$; the right-hand side of the box is the action for UAV $m$ at time slot $t$. 
 The decision problem faced by a player in a stochastic game when all other players choose a fixed  strategy profile is equivalent to an Markov decision processes (MDP) \cite{neyman2003markov}.  An agent-independent method  is proposed,  for which all agents conduct a decision algorithm independently but share a common structure based on Q-learning. 
 
Since Markov property is used to model the dynamics of the environment, the rewards of UAVs are based only on the current state and action. MDP for agent (UAV) $m$ consists 
 of: 1) a discrete set of environment state $\mathcal{S}_m$, 2) a discrete set of possible actions $\Theta_m$, 3) a one-slot dynamics of the environment given by the state transition probabilities $F_{s_m^t \rightarrow s_m^{t+1}} = F(s_m^{t},\theta, s_m^{t+1})$ for all $\theta_m \in \Theta_m$ and $s_m^{t}, s_m^{t+1} \in \mathcal{S}_m$; 4) a reward function $R_m$ denoting the expected value of the next reward for UAV $m$. For instance, given the current state $s_m$, action $\theta_m$ and the next state $s_m'$: $R_m(s_m,\theta_m,s_m') = E\{r_m^{t+1}| s_m^t=s_m, \theta_m^t = \theta_m,s_m^{t+1} = s_m'\}$, where $r_m^{t+1}$ denotes the immediate reward of the environment to UAV $m$ at time $t+1$.  Notice that  UAVs cannot interact with each other, hence each UAV knows imperfect information of its operating stochastic environment.   In this article, Q-learning is  used to  solve MDPs, for which a learning agent operates in an unknown stochastic environment and does not know the   reward and transition functions \cite{suttonRL}. Next  we describe the  Q-learning algorithm for solving the MDP for one UAV. Without loss of generalities, UAV $m$ is considered for simplicity.  
Two fundamental concepts of algorithms for solving the above MDP is the state value function   and action value function (Q-function) \cite{neto2005single}. Specifically,  the former  in fact is the expected reward for some state in \eqref{LTReward.eq2} giving the agent in following some policy. Similarly, the Q-function for UAV $m$ is the expected reward starting from the state $s_m$,  taking  the action $\theta_m$ and following policy $\pi$, which can be expressed as
\begin{eqnarray}\label{Q-func.eq}
\begin{aligned}
Q_m(s_m,\theta_m,\pi)= 
E\bigg\{  \sum_{ \tau = 0}^{+\infty}\delta^{\tau} r_{m}^{t+\tau+1} \mid  s^{t}=s, \theta_m^{t}=\theta_m \bigg\},
\end{aligned}
\end{eqnarray}\noindent 
where the corresponding  values  of \eqref{Q-func.eq}  are called action values (Q-values).

\begin{proposition}\label{Pro:StateValuFunc}
 A recursive relationship for the state value function can be derived from the established return. Specifically, for any strategy $\pi$ and any state $s_m$, the following  condition holds between two consistency states  $s_m^t=s_m$ and $s_m^{t+1}=s'_m$, with  $s_m,~s'_m\in\mathcal{S}_m$: 
\begin{eqnarray} \label{LTRewardEx.eq2}
\begin{aligned}
 V_m(s_m,\pi) &= E\left\lbrace \sum_{\tau = 0}^{+\infty} \delta^{\tau} r_m^{t+\tau+1} | s_{m}^t  = s_m \right\rbrace \\
&= \sum_{s'_m \in \mathcal{S}_m} F(s_m,\theta,s'_m)\sum_{\theta \in \Theta} \prod_{j\in \mathcal{M}} \pi_j(s_j,\theta_j)   \times \left[  R_m(s_m,\theta,s'_m) + \delta V(s'_m,\pi) \right],
\end{aligned}
\end{eqnarray}
where $\pi_j(s_j,\theta_j)$ is the probability of choosing action $\theta_j$ in state $s_j$ for UAV $m$.
\begin{proof}
See Appendix~A. 
\end{proof}
\end{proposition}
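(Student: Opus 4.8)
The plan is to establish \eqref{LTRewardEx.eq2} as the Bellman consistency equation for the value function, by decomposing the infinite discounted sum in \eqref{LTReward.eq2} into its first summand and a discounted tail, and then invoking the Markov property to identify that tail with $V_m$ evaluated at the successor state. I would begin directly from the definition and peel off the $\tau=0$ term, writing
$$\sum_{\tau=0}^{+\infty}\delta^{\tau} r_m^{t+\tau+1} = r_m^{t+1} + \delta\sum_{\tau=0}^{+\infty}\delta^{\tau} r_m^{t+\tau+2},$$
where the second series is obtained after the re-indexing $\tau \mapsto \tau-1$. Taking the conditional expectation given $s_m^t = s_m$ and using linearity then splits $V_m(s_m,\pi)$ into an immediate-reward term $E\{r_m^{t+1}\mid s_m^t=s_m\}$ and a discounted future term.

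Second, I would evaluate the immediate-reward term by conditioning on the joint action $\theta \in \Theta$ and on the successor state $s'_m$. Since all agents select their actions independently according to their mixed strategies, the joint action is realised with probability $\prod_{j\in\mathcal{M}}\pi_j(s_j,\theta_j)$, while the successor state is reached with probability $F(s_m,\theta,s'_m)$; averaging the reward $R_m(s_m,\theta,s'_m)$ against these two distributions reproduces the expected immediate reward. Because $\Theta$ and $\mathcal{S}_m$ are finite, the two resulting finite sums may be reordered freely, which is precisely what yields the product-of-sums weighting $\sum_{s'_m}F(s_m,\theta,s'_m)\sum_{\theta}\prod_{j}\pi_j(s_j,\theta_j)$ displayed on the right-hand side of \eqref{LTRewardEx.eq2}.

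The crux of the argument, and the step I expect to be the main obstacle, is the discounted future term. Here I would condition on the successor state $s_m^{t+1}=s'_m$ and argue that, by the Markov property together with the time-homogeneity of $F$ and $R_m$, the conditional expectation of $\sum_{\tau=0}^{+\infty}\delta^{\tau} r_m^{t+\tau+2}$ given $s_m^{t+1}=s'_m$ equals $V_m(s'_m,\pi)$ itself. Making this rigorous requires care on two fronts: one must verify that the future evolution depends on the past only through $s'_m$, and not through $s_m$ or the chosen action, so that the tail expectation genuinely matches the value function of a process restarted in $s'_m$; and one must justify interchanging the expectation with the infinite sum. The latter follows since the per-slot rewards in \eqref{RewardFunc.eq2} are bounded and $0\le\delta<1$, so the series converges absolutely and dominated convergence applies. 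Substituting the identified tail back and collecting the immediate and discounted terms under the common weighting then delivers \eqref{LTRewardEx.eq2}, completing the derivation.
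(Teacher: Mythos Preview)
Your proposal is correct and follows essentially the same approach as the paper's proof in Appendix~A: split the discounted return into the immediate reward plus a discounted tail, condition on the joint action (weighted by $\prod_{j}\pi_j(s_j,\theta_j)$) and successor state (weighted by $F(s_m,\theta,s'_m)$), and identify the tail with $V(s'_m,\pi)$ via the Markov property. If anything, your sketch is more careful than the paper about justifying the interchange of expectation and infinite sum and about isolating where time-homogeneity is used.
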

Note that the state value function $V_m(s_m,\pi)$ is  the expected return when starting in state $s_m$ and following a strategy  $\pi$ thereafter.  
Based on {\bf Proposition \ref{Pro:StateValuFunc}}, we can rewrite the Q-function in \eqref{Q-func.eq} also into a recursive from, which is given by 
\begin{eqnarray} \label{Q-func.eq2}
\begin{aligned}
& Q_m(s_m,\theta_m,\pi)= E \bigg\lbrace r_m^{t+1} +  \delta \sum_{\tau = 0}^{+\infty} \delta^{\tau} r_m^{t+\tau+2}|s_m^t=s_m,\theta_m^t=\theta,s_m^{t+1}=s'_m \bigg\rbrace \\
&\quad = \sum_{s'_m\in \mathcal{S}_m} F(s_m,\theta,s'_m)  \sum_{\theta_{-m} \in \Theta_{-m}} \prod_{j\in \mathcal{M}\setminus \{m\}} \pi_j(s_j,\theta_j)   \times \left[R(s_m,\theta,s'_m) + \delta V_m(s'_m,\pi)\right].
\end{aligned}
\end{eqnarray}
Note that from \eqref{Q-func.eq2}, Q-values  depend on the actions of all the UAVs. 
It should be pointed out that \eqref{LTRewardEx.eq2} and \eqref{Q-func.eq2} are the basic equations for the Q-learning based reinforcement learning algorithm for solving the MDP of each UAV. 
From \eqref{LTRewardEx.eq2} and \eqref{Q-func.eq2}, we also can derive the following relationship between state values and Q-values:
\begin{eqnarray}\label{State-Q.eq}
\begin{aligned}
V_m(s_m,\pi)=\sum_{\theta_m \in \Theta_m}  \pi_m(s_m,\theta_m) Q_m(s_m,\theta_m,\pi).
\end{aligned}
\end{eqnarray}

As discussed above, the goal of solving a MDP is to find  an optimal strategy to obtain a maximal reward. An optimal strategy for UAV $m$ at state $s_m$,  can be defined, from the perspective of state value function, as
\begin{eqnarray}\label{OptStateVal.eq}
\begin{aligned}
V_m^* = \max_{\pi_m} V_m(s_m,\pi), ~~s_m \in \mathcal{S}_m.
\end{aligned}
\end{eqnarray} 
For the optimal Q-values, we also have
\begin{eqnarray}\label{OptQVal.eq}
\begin{aligned}
Q_m^*(s_m,\theta_m) = \max_{\pi_m} Q_m(s_m,\theta_m,\pi), s_m \in \mathcal{S}_m, \theta_m \in \Theta_m.
\end{aligned}
\end{eqnarray}
Substituting \eqref{State-Q.eq} to \eqref{OptStateVal.eq}, the optimal  state value equation in \eqref{OptStateVal.eq} can be reformulated as 
\begin{eqnarray}\label{OptStateVal.eq2}
\begin{aligned}
V_m^*(s_m) = \max_{\theta_m}  Q_m^*(s_m,\theta_m),
\end{aligned}
\end{eqnarray}
where the fact that $\sum_{\theta_m} \pi(s_m,\theta_m) Q_m^*(s_m,\theta_m) \leq \max_{\theta_m} Q_m^*(s_m,\theta_m)$ was applied to obtain \eqref{OptStateVal.eq2}. 
 Note that in \eqref{OptStateVal.eq2}, the optimal state value equation is a maximization over the action space instead of the strategy space. 

Next by combining \eqref{OptStateVal.eq2} with  \eqref{LTRewardEx.eq2} and \eqref{Q-func.eq2}, one can obtain the 
Bellman optimality equations, for state values and for Q-values, respectively:
\begin{eqnarray}\label{OptStateVal.eq3}
\begin{aligned}
&V_m^*(s_m)  = \sum_{\theta_{-m} \in \Theta_{-m}} \prod_{j\in \mathcal{M}\setminus \{m\}} \pi_j(s_j,\theta_j)  \times \max_{\theta_m} \sum_{s'_m} F(s_m,\theta,s'_m)    \left[ R(s_m,\theta_m,s'_m) + \delta V_m^*(s'_m) \right],
\end{aligned}
\end{eqnarray}\noindent
and
\begin{eqnarray}\label{Q-func.eq3}
\begin{aligned}
&Q_m^*(s_m,\theta_m) = \\& \sum_{\theta_{-m} \in \Theta_{-m}}  \prod_{j\in \mathcal{M}\setminus \{m\}} \pi_j(s_j,\theta_j) \times  \sum_{s'_m} F(s_m,\theta,s'_m)   \bigg[R(s_m,\theta_m,s'_m) + \delta \max_{\theta'_m} Q_m^*(s'_m,\theta'_m) \bigg].
\end{aligned}
\end{eqnarray}\noindent
Note that \eqref{Q-func.eq3} indicates that the optimal strategy will always choose an action that maximizes the Q-function for the current state.  In the multi-agent case, the Q-function of each agent depends on the joint action and is conditioned on the joint policy, which makes it complex to find an  optimal joint strategy \cite{neto2005single}.  To overcome these challenges, we consider   UAV are independent learners (ILs), that is UAVs do not observe the rewards and actions of the other UAVs, they interact with the environment as if no other UAUs exist.

\subsection{Q-Learning based Resource Allocation for  Multi-UAV Networks }

  In this subsection, an ILs \cite{matignon2012independent} based  MARL algorithm is proposed to solve the resource allocation among  UAVs. Specifically, 
   each UAV runs a standard Q-learning algorithm to learn its optimal Q-values and simultaneously determines an optimal strategy for the MDP. Specifically, the selection of an action in each iteration depends on Q-values in terms of two states- $s_m$ and its successors. Hence Q-values provide insights on the future quality of the actions in the successor state. The update rule for Q-learning \cite{suttonRL} is given by 
\begin{eqnarray} \label{Q_update.eq}
\begin{aligned}
&Q_m^{t+1}(s_m,\theta_m) = Q_m^{t}(s_m,\theta_m) +  \alpha^t   \bigg\lbrace r_m^t  + \delta \max_{\theta'_m \in \Theta_m} Q_m^{t}(s'_m,\theta'_{m}) - Q_m^t(s_m,\theta_m) \bigg\rbrace, 
\end{aligned}
\end{eqnarray}\noindent
with $s_m^t = s_m, ~\theta_{m}^t = \theta_m$, where $s'_m$ and $\theta'_m$ correspond to $s_m^{t+1}$ and $\theta_m^{t+1}$, respectively. 
Note that an optimal action-value function can be obtained recursively from the corresponding action-values. Specifically, each agent learns the optimal action-values based on the updating rule in \eqref{Q_update.eq}, where $\alpha_t$ denotes the learning rate and $Q_m^t$ is the action-value of UAV $m$ at time slot $t$.

Another important component of Q-learning is  action selection mechanisms, which are used to select the actions that the agent will perform during the learning process. Its purpose is to strike a balance between exploration and exploitation that the agent can reinforce the evaluation it already knows to be good but also explore new actions \cite{suttonRL}. In this article, we consider  $\epsilon$-greedy exploration.  
In $\epsilon$-greedy selection, the agent  selects a random action with probability $\epsilon$ and selects the best action, which corresponds to the highest Q-value at the moment, with probability $1 - \epsilon$.  As such, the probability of selecting action $\theta_m$ at state $s_m$ is given by
\begin{eqnarray}\label{eGreedy.eq}
\begin{aligned}
\pi_m(s_m,\theta_m) = \begin{cases}1 - \epsilon ,~\text{if}~ Q_m~\text{of}~\theta_m \text{is the highest}, \\
\epsilon, ~\text{otherwise}.
\end{cases}
\end{aligned}
\end{eqnarray}
where $\epsilon \in (0,1)$. 
To ensure the convergence of Q-learning, the learning rate $\alpha_t$  are set as in \cite{jaakkola1994convergence}, which is given by 
\begin{eqnarray}\label{LearnRate.eq}
\begin{aligned}
\alpha_t = \frac{1}{(t+c_{\alpha})^{\varphi_{\alpha}}},
\end{aligned}
\end{eqnarray}\noindent
where $c_{\alpha} >0$,  $\varphi_{\alpha} \in (\frac{1}{2},1]$ .

Note that each UAV runs the Q-learning procedure independently in the proposed ILs based MARL algorithm. Hence, for each UAV $m$, $m\in \mathcal{M}$, the  Q-learning procedure  is concluded in {\bf Algorithm \ref{alg:MARL-QL}}. In {\bf Algorithm \ref{alg:MARL-QL}}, the initial Q-values are  set to zero, therefore, it is also called zero-initialized Q-learning \cite{koenig1992complexity}. Since UAVs have no prior information on the initial state, a UAV takes a strategy with equal probabilities, i.e., $\pi_{m}(s_m,\theta_m) = \frac{1}{|\Theta_m|}$.

\begin{algorithm}
\caption{  Q-learning based MARL algorithm for UAVs}
\label{alg:MARL-QL}
\begin{algorithmic}[1]
\STATE {{\bf Initialization:}}
\STATE{Set $t=0$ and the parameters $\delta,~c_{\alpha}$}
\FORALL{ $m \in \mathcal{M}$}
\STATE{Initialize the action-value $Q_m^t(s_m,\theta_m)=0$, strategy $\pi_{m}(s_m,\theta_m) = \frac{1}{|\Theta_m|}=\frac{1}{MKJ}$;  }
\STATE{Initialize the state $s_m = s_m^t = 0$;}
\ENDFOR
\STATE{{\bf Main Loop:}}
\WHILE{$t < T$}
\FORALL{UAV $m$, $m\in\mathcal{M}$}
\STATE{Update  the learning rate $\alpha_t$  according to \eqref{LearnRate.eq}.}
\STATE{Select an action $\theta_m$ according to the strategy $\pi_m(s_m)$.} 
\STATE{Measure the achieved SINR at the receiver according to \eqref{SINRF.eq};}
\IF{$\gamma_m(t)\geq \bar{\gamma}_m$}
\STATE{Set $s_m^t=1$.}
\ELSE
\STATE{Set $s_m^t=0$. }
\ENDIF
\STATE{Update the instantaneous reward $r_{m}^t$ according to \eqref{RewardFunc.eq2}.}
\STATE{Update the action-value $Q_{m}^{t+1}(s_m,\theta_m)$ according to \eqref{Q_update.eq}. }
\STATE{Update the strategy $\pi_m(s_m,\theta_m)$ according to \eqref{eGreedy.eq}. }
\STATE{Update $t=t+1$ and the state $s_m = s_m^t$.}
\ENDFOR
\ENDWHILE
\end{algorithmic}
\end{algorithm}

\subsection{ Analysis of the proposed MARL algorithm}
In this subsection, we investigate the convergence  of  the proposed MARL based resource allocation algorithm. Notice that  the proposed MARL algorithm can be treated as an independent multi-agent Q-learning algorithm, in which each UAV as a learning agent makes a decision based on the Q-learning algorithm.
 Therefore, the convergence is concluded in the following proposition.  
 \begin{proposition}\label{Pro:Convergence}
 In the proposed MARL algorithm of {\bf Algorithm \ref{alg:MARL-QL}}, the Q-learning procedure for each UAV is always converged to the Q-value for individual optimal strategy. 
  \end{proposition}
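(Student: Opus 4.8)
The plan is to recognize that, under the independent-learner assumption stated at the end of Section IV-A, each UAV $m$ faces a stationary single-agent MDP in which the aggregate effect of the remaining UAVs $\theta_{-m}$ is absorbed into the (unknown) transition law $F$ and the reward $R_m$. Consequently, the update rule \eqref{Q_update.eq} is exactly the classical single-agent Q-learning iteration, and its convergence can be established through the stochastic-approximation framework of \cite{jaakkola1994convergence}. First I would define the per-UAV error process $\Delta_m^t(s_m,\theta_m) = Q_m^t(s_m,\theta_m) - Q_m^*(s_m,\theta_m)$, where $Q_m^*$ is the fixed point of the Bellman optimality equation \eqref{Q-func.eq3}, and rewrite \eqref{Q_update.eq} in the canonical form
\begin{eqnarray}
\begin{aligned}
\Delta_m^{t+1}(s_m,\theta_m) = \big(1-\alpha_t\big)\Delta_m^t(s_m,\theta_m) + \alpha_t F_m^t(s_m,\theta_m),
\end{aligned}
\end{eqnarray}
with $F_m^t(s_m,\theta_m) = r_m^t + \delta \max_{\theta'_m} Q_m^t(s'_m,\theta'_m) - Q_m^*(s_m,\theta_m)$.

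Next, I would verify the three hypotheses of the stochastic-approximation convergence theorem. The step-size conditions $\sum_t \alpha_t = \infty$ and $\sum_t \alpha_t^2 < \infty$ follow directly from \eqref{LearnRate.eq}, since $\varphi_{\alpha} \in (\tfrac{1}{2},1]$ makes the first series diverge and the second converge. The bounded-variance condition holds because the reward \eqref{RewardFunc.eq2} is uniformly bounded: the rate term is capped by the finite bandwidth and a bounded SINR, while the power cost is drawn from the finite set $\mathbf{P}$, so that $\mathrm{Var}[F_m^t\mid \mathcal{F}_t]\le C(1+\|\Delta_m^t\|_{\infty}^2)$ for some constant $C$, where $\mathcal{F}_t$ denotes the history up to slot $t$. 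Finally, the $\epsilon$-greedy rule \eqref{eGreedy.eq} assigns strictly positive probability to every action in every state, which guarantees that each state--action pair $(s_m,\theta_m)$ is visited infinitely often with probability one.

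The central step is to establish the contraction property. Defining the Bellman operator $(\mathcal{H}_m Q_m)(s_m,\theta_m)$ as the right-hand side of \eqref{Q-func.eq3}, I would prove that $\mathcal{H}_m$ is a $\delta$-contraction in the max-norm, i.e. $\|\mathcal{H}_m Q - \mathcal{H}_m Q'\|_{\infty} \le \delta \|Q - Q'\|_{\infty}$, using the non-expansiveness of the $\max$ operator together with $\sum_{s'_m} F(s_m,\theta,s'_m)=1$ and $0\le\delta<1$. This yields $\|E[F_m^t\mid \mathcal{F}_t]\|_{\infty} \le \delta\|\Delta_m^t\|_{\infty}$, which is precisely the contraction hypothesis of the theorem. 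With the three hypotheses in place, the result of \cite{jaakkola1994convergence} gives $\Delta_m^t \to 0$ almost surely, i.e. $Q_m^t \to Q_m^*$, and by \eqref{OptStateVal.eq2} the induced greedy strategy is individually optimal for each UAV.

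I expect the main obstacle to be the legitimacy of the stationary-environment reduction: because all UAVs learn simultaneously, $\theta_{-m}$ is not truly fixed, so the transition law perceived by UAV $m$ is in principle non-stationary. I would address this by invoking the MDP-equivalence observation of Section IV-A --- that a stochastic game in which all other strategies are held fixed reduces to an MDP \cite{neyman2003markov} --- and by arguing that, since each agent's exploration probability and learning rate vanish over time, the induced single-agent processes stabilise, so that a well-defined fixed point $Q_m^*$ exists and the per-agent contraction argument applies.
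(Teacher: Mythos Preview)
Your proposal is correct and follows essentially the same route as the paper: reduce each UAV's problem to a single-agent MDP via the independent-learner assumption, rewrite \eqref{Q_update.eq} as an error process $\Delta_m^t = Q_m^t - Q_m^*$, invoke the stochastic-approximation result of \cite{jaakkola1994convergence}, and verify its hypotheses by (i) checking the step-size conditions from \eqref{LearnRate.eq}, (ii) showing the Bellman operator is a $\delta$-contraction in the max-norm, and (iii) bounding the variance using boundedness of the reward. The paper packages the contraction step as a separate proposition and is somewhat looser about the infinite-visitation and non-stationarity points that you flag explicitly, but the skeleton of the argument is identical.
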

  
The proof of {\bf Proposition \ref{Pro:Convergence}} depends on the following observations. Due to the non-cooperative property of UAVs, the convergence of the proposed MARL algorithm is dependent on the convergence of Q-learning algorithm   \cite{matignon2012independent}.  Therefore, we focus on the proof of convergence for the Q-learning algorithm in {\bf Algorithm \ref{alg:MARL-QL}}.  
\begin{theorem}\label{Theo:ConvergenceQ}
The Q-learning algorithm in {\bf Algorithm \ref{alg:MARL-QL}}  with the update rule in \eqref{Q_update.eq} converges with probability one (w.p.1)  to the optimal $Q_m^*(s_m,\theta_m)$ value if 
\begin{enumerate}
\item The state and action spaces are finite;
\item $\sum_{t=0}^{+\infty}\alpha^t = \infty$, $\sum_{t=0}^{+\infty}(\alpha^t)^2 < \infty$ uniformly w.p. 1;
\item $\mathrm{Var}\{r_m^t\}$ is bounded;
\end{enumerate} 
\end{theorem}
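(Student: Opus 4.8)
The plan is to recognize this as the classical single-agent Q-learning convergence result, specialized to the per-UAV Markov decision process that an independent learner perceives when it treats the joint behavior of the other UAVs and the channel randomness as a stationary part of its own environment. I would cast the update rule \eqref{Q_update.eq} as a stochastic approximation recursion and invoke the convergence lemma for random iterative processes from \cite{jaakkola1994convergence}. First I would define the error process $\Delta^t(s_m,\theta_m) = Q_m^t(s_m,\theta_m) - Q_m^*(s_m,\theta_m)$, where $Q_m^*$ is the fixed point of the Bellman optimality operator in \eqref{Q-func.eq3}. Subtracting $Q_m^*$ from both sides of \eqref{Q_update.eq} and updating only the visited pair, the recursion takes the canonical form
\[
\Delta^{t+1}(s_m,\theta_m) = (1-\alpha^t)\Delta^t(s_m,\theta_m) + \alpha^t F^t(s_m,\theta_m),
\]
with noise term $F^t(s_m,\theta_m) = r_m^t + \delta \max_{\theta'_m} Q_m^t(s'_m,\theta'_m) - Q_m^*(s_m,\theta_m)$.

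The crux of the argument is the contraction property. Let $\mathcal{F}^t$ denote the history up to slot $t$, and define the Bellman operator $H$ by $(HQ)(s_m,\theta_m) = \sum_{s'_m} F(s_m,\theta,s'_m)\,[R(s_m,\theta_m,s'_m) + \delta \max_{\theta'_m} Q(s'_m,\theta'_m)]$. Taking the conditional expectation yields $E[F^t \mid \mathcal{F}^t] = (HQ_m^t)(s_m,\theta_m) - Q_m^*(s_m,\theta_m) = (HQ_m^t)(s_m,\theta_m) - (HQ_m^*)(s_m,\theta_m)$, because $Q_m^*$ is a fixed point of $H$ by \eqref{Q-func.eq3}. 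Since the $\max$ operator is non-expansive and the transition probabilities sum to one, $H$ is a $\delta$-contraction in the sup-norm, whence $\|E[F^t \mid \mathcal{F}^t]\|_\infty \leq \delta \,\|\Delta^t\|_\infty$ with $\delta < 1$. For the second moment, condition (3) on the bounded reward variance, together with the boundedness of the Q-values that follows from condition (1) (finite spaces) and the boundedness of rewards implied by \eqref{RewardFunc.eq2}, gives $\mathrm{Var}[F^t \mid \mathcal{F}^t] \leq C\,(1 + \|\Delta^t\|_\infty)^2$ for some constant $C$.

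With the learning-rate schedule \eqref{LearnRate.eq} satisfying condition (2), all hypotheses of the stochastic approximation lemma of \cite{jaakkola1994convergence} are met, so $\Delta^t \to 0$ w.p.1, i.e., $Q_m^t \to Q_m^*$. I would close by noting that the lemma implicitly requires every state-action pair to be visited infinitely often; this is guaranteed by the persistent randomization of the $\epsilon$-greedy rule \eqref{eGreedy.eq} with $\epsilon > 0$ combined with $\sum_{t} \alpha^t = \infty$. The main obstacle is conceptual rather than computational: justifying that each UAV's environment can be treated as a stationary MDP even though the companion agents learn simultaneously, since the contraction property of $H$ hinges on stationary transition and reward statistics. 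To handle this I would lean on the independent-learner modeling of \cite{matignon2012independent} to freeze the companion strategies, thereby making the per-agent Bellman operator well-defined and the sup-norm contraction legitimate.
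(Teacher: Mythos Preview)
Your proposal is correct and follows essentially the same route as the paper: both cast \eqref{Q_update.eq} as a stochastic approximation, define the error $\Delta^t = Q_m^t - Q_m^*$, invoke the random iterative process lemma of \cite{jaakkola1994convergence}, verify the sup-norm $\delta$-contraction of the Bellman operator, and bound the conditional variance using assumption (3). Your additional remarks on infinite visitation via $\epsilon$-greedy exploration and on the stationarity assumption for independent learners are sensible caveats that the paper's proof does not spell out, but the core argument is the same.
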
 
 
\begin{proof}
See Appendix~B.

 \end{proof}

\section{Simulation Results}
In this section, we verify the effectiveness of the proposed MARL based resource allocation algorithm for multi-UAV networks by simulations. We consider  multi-UAV networks deployed in a disc area with a radius $r_d = 500$ m. The ground users are randomly and uniformly distributed inside the disk. All UAVs are assumed to fly at a fixed altitude $H = 100$ m. In the simulations, the noise power is assumed to be $\sigma^2 = -80$ dBm, the  subchannel bandwidth is $\frac{W}{K}=75$ KHz and $Ts = 0.1$ s. For the probabilistic model, the channel parameters in the simulations follow \cite{Alzenad17WCL}, where  $\mathfrak{a} = 9.61$ and $\mathfrak{b} = 0.16$. Moreover, the carrier frequency is $f = 2$ GHz, $\eta^{\mathrm{LoS}} = 1$ and $\eta^{\mathrm{NLoS}} = 20$. 
For the LoS channel model, the channel power gain at reference distance $d_0 = 1$ m is set as $\beta_0 = -60$ dB and the path loss coefficient is set as $\alpha=2$ \cite{Wu18TWC}.  In the simulations, the maximal power level number is $J=3$, the maximal power for each UAV is $P_m = P = 23$ dBm, where the maximal power is equally divided into $J$ discrete power values. The cost per unit level of power is $\omega_m = \omega = 100$ and 
the minimum SINR for the users is set as $\gamma_0 = 3$ dB. Moreover, $c_{\alpha} = 0.5$, $\rho_{\alpha} = 0.8$ and $\delta = 1$. 

In Fig. \ref{Ill_model.Fig}, we consider a random realization of  a multi-UAV network in horizontal plane, where $L = 100$ users are uniformly distributed in a disk with radius $r = 500$ m and two UAVs are initially  located at the edge of the disk with the angle $\phi = \frac{\pi}{4}$. For illustrative purposes, Fig. \ref{Com_1.Fig} shows the average  reward and the average reward per time slot of the UAVs under the setup of  Fig. \ref{Ill_model.Fig}, where the speed of the UAVs are set as $40$ m/s. Fig. \ref{Fig.sub.1:Com_avg_reward} shows  average  rewards with different $\epsilon$, which  is calculated as $v^t = \frac{1}{M}\sum_{m \in \mathcal{M}} v_m^t$. As can be  observed from Fig. \ref{Fig.sub.1:Com_avg_reward}, the average reward increases with the algorithm iterations. This is because the long-term reward can be improved by the proposed MARL algorithm. However, the curves of the average reward become flat when $t$ is higher that 250 time slots. In fact, the UAVs will fly outside  the disk when $t > 250$. As a result, the average reward will not increase. Correspondingly,  Fig. \ref{Fig.sub.2:avg_reward_per_epsiode} illustrates the average instantaneous  reward per time slot $r^t = \sum_{m \in \mathcal{M}} r_m^t$.  As can be observed from Fig. \ref{Fig.sub.2:avg_reward_per_epsiode},  the average reward per time slot   decreases with    algorithm iterations. This is because that the learning rate $\alpha_t$ in the adopted Q-learning procedure is a function of $t$ in \eqref{LearnRate.eq}, where $\alpha_t$ decreases with time slots increasing. 
 Notice that from  \eqref{LearnRate.eq}, $\alpha_t$ will decrease with algorithm iterations, which means that the update rate of the Q-values becomes slow with increasing $t$. Moreover, Fig. \ref{Com_1.Fig} also investigates the average reward with different $\epsilon = \{0,~0.2,~0.5,~0.9\}$.  If $\epsilon = 0$, each UAV will choose a greedy action which is also called exploit strategy.  If $\epsilon$ goes to 1, each UAV will choose a   random action with higher probabilities. Notice that from   Fig. \ref{Com_1.Fig}, $\epsilon = 0.5$ is a good choice in the considered setup.

\begin{figure} [!t]
\centering
\includegraphics[width= 3.5in, height=2.8in]{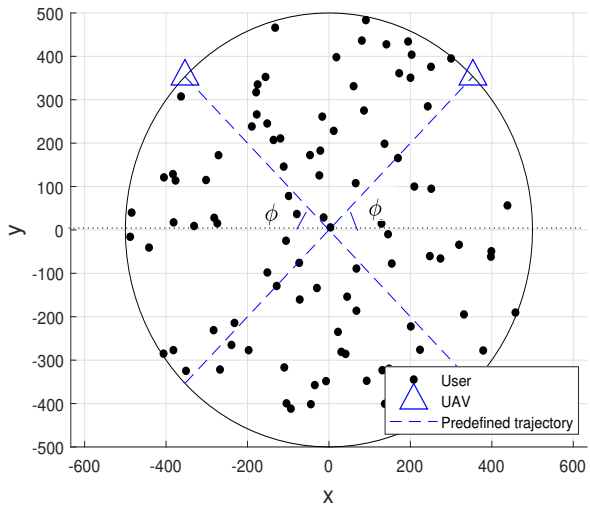}
 \caption{Illustration of UAVs based networks with $M = 2$ and $L=100$.}\label{Ill_model.Fig}
\end{figure}

\begin{figure}[t!]
\centering
\subfigure[Comparisons of average rewards.  ]{
\begin{minipage}[b]{0.45\textwidth}
\label{Fig.sub.1:Com_avg_reward}
\includegraphics[width=1\textwidth]{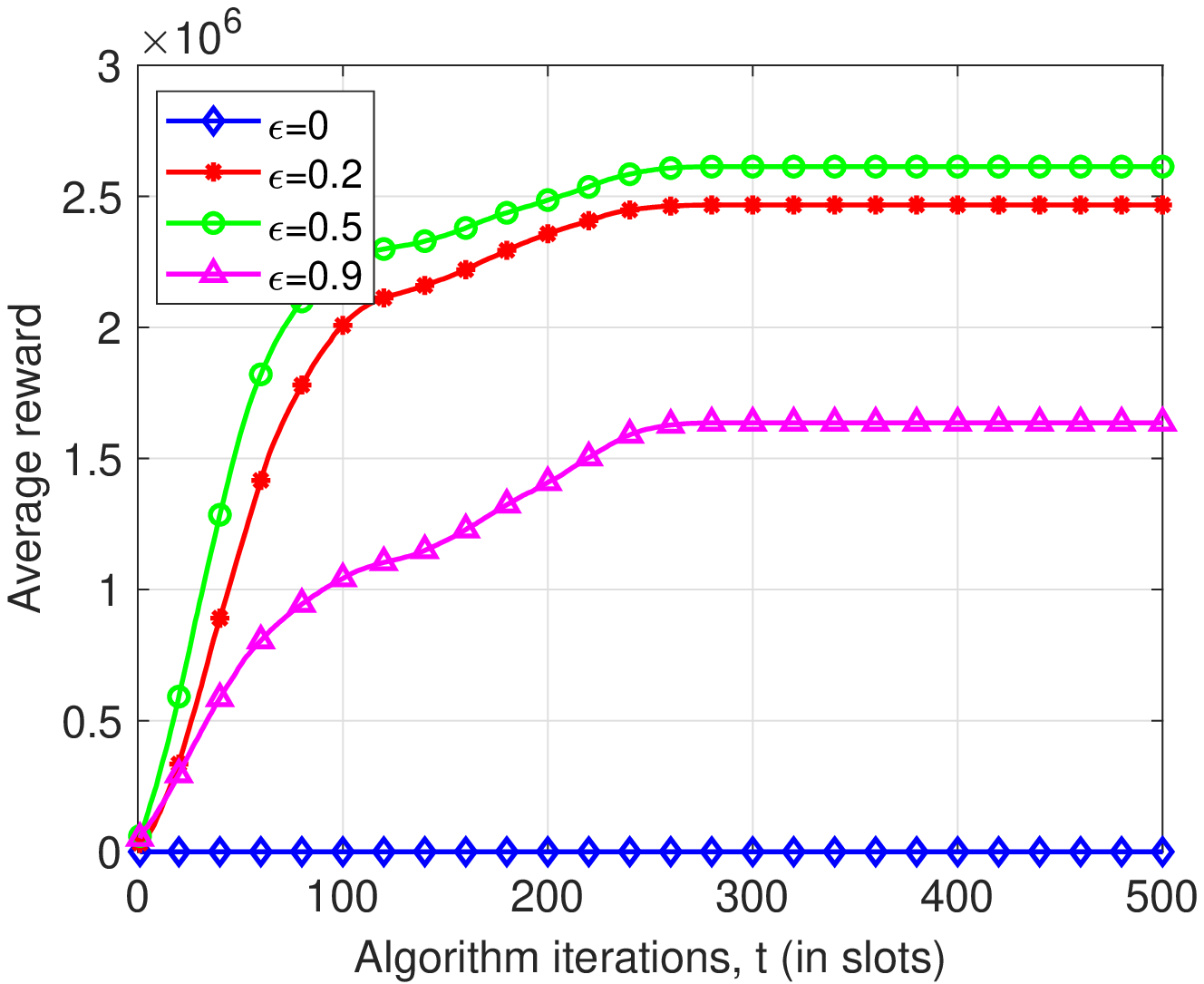}
\end{minipage}
}
\subfigure[Average rewards per time slot.]{
\label{Fig.sub.2:avg_reward_per_epsiode}
\begin{minipage}[b]{0.45\textwidth}
\includegraphics[width=1\textwidth]{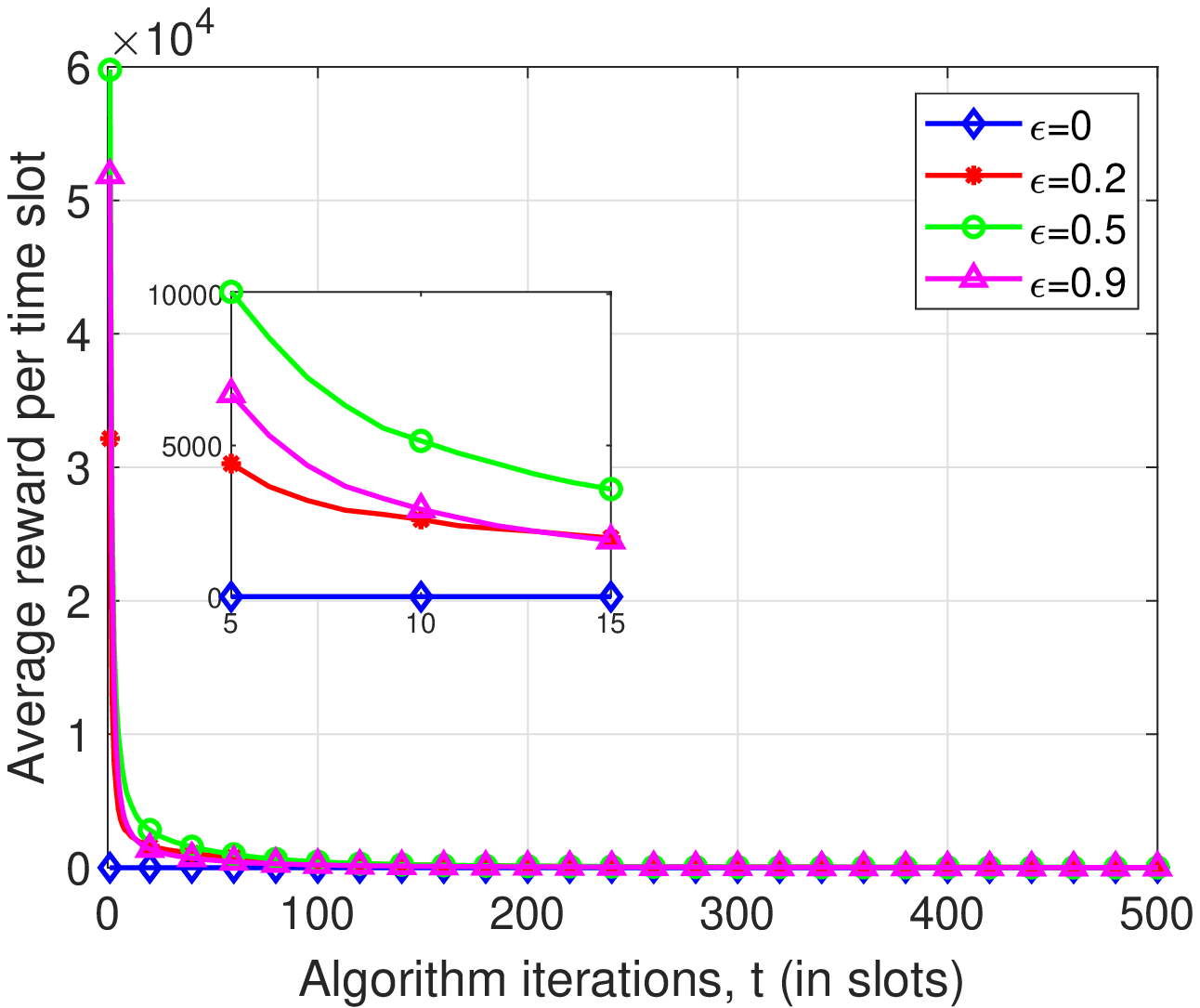}
\end{minipage}
}
\caption{Comparisons for average rewards with different $\epsilon$, where  $M = 2$ and $L=100$.}
\label{Com_1.Fig}
\end{figure}

In Fig. \ref{Com_2.Fig} and Fig. \ref{Com_3.Fig}, we investigate the average reward under different system configurations. Fig. \ref{Com_2.Fig} illustrates the average reward  with LoS channel model given in \eqref{LosCh.eq} over different $\epsilon$. Moreover, Fig. \ref{Com_3.Fig} illustrates the  average reward under probabilistic model with $M = 4,~K=3$ and $L=200$. Specifically, the UAVs randomly distributed in the cell edge. In the iteration procedure, each UAV flies over the cell followed by a straight line over the cell center, that is the center of the disk.  As can be observed from   Fig.  \ref{Com_2.Fig} and Fig. \ref{Com_3.Fig}, the curves of the average reward have the similar trends with that of Fig. \ref{Com_1.Fig} under different $\epsilon$. Besides, the considered multi-UAV network attains the optimal average reward when $\epsilon = 0.5$ under different network configurations.  
\begin{figure}[t!]
\centering
\subfigure[Comparisons of average rewards.  ]{
\begin{minipage}[b]{0.45\textwidth}
\label{Fig2.sub.1:Com_avg_reward}
\includegraphics[width=1\textwidth]{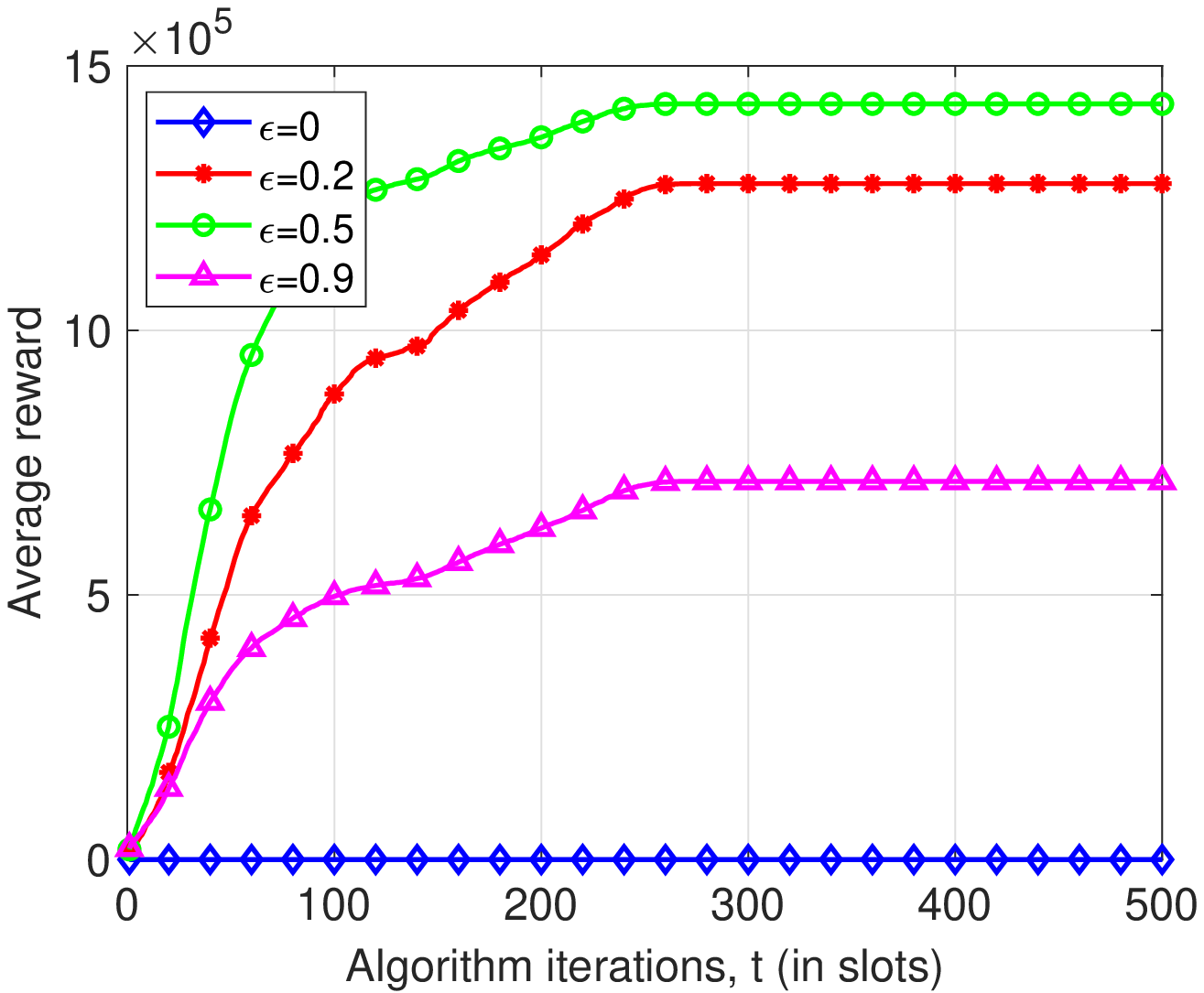}
\end{minipage}
}
\subfigure[Average rewards per time slot.]{
\label{Fig2.sub.2:avg_reward_per_epsiode}
\begin{minipage}[b]{0.45\textwidth}
\includegraphics[width=1\textwidth]{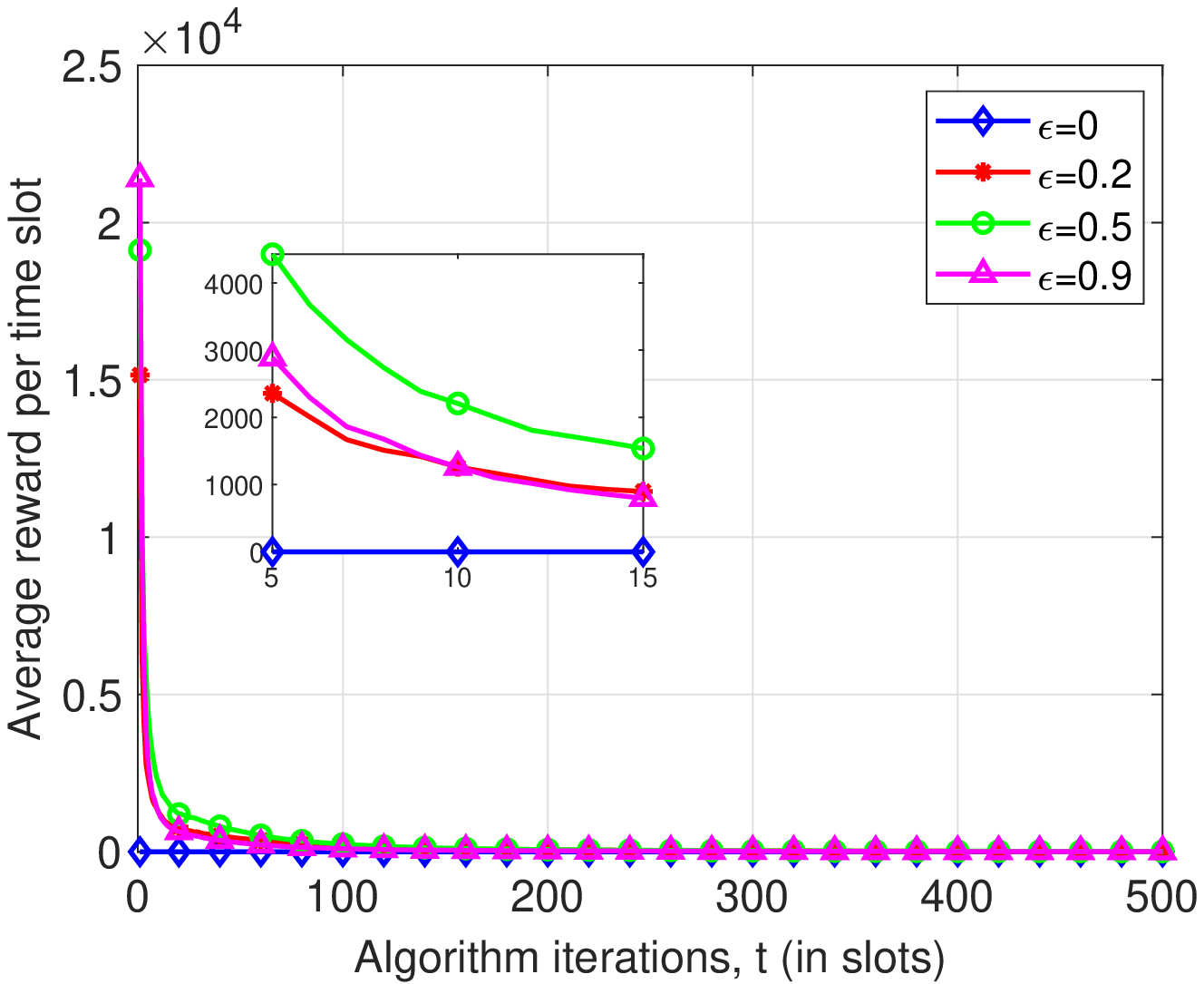}
\end{minipage}
}
\caption{ LoS channel model with different $\epsilon$, where $M = 2$ and $L=100$.}
\label{Com_2.Fig}
\end{figure}

\begin{figure} [!t]
\centering
\includegraphics[width= 3.5in, height=2.8in]{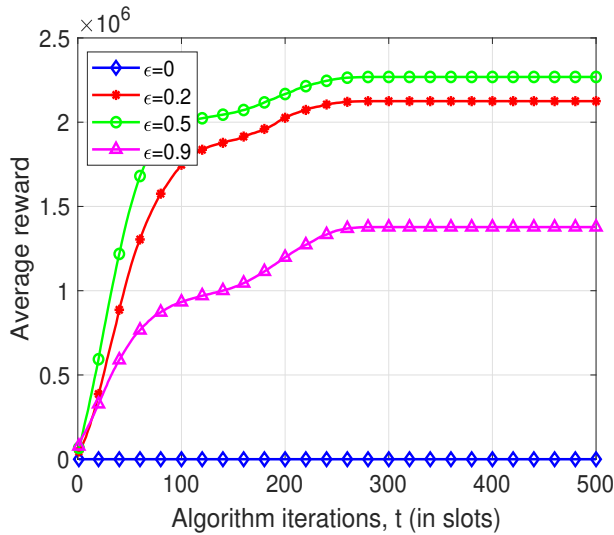}
 \caption{Illustration of multi-UAV networks with $M = 4,~K=3$ and $L=200$.}\label{Com_3.Fig}
\end{figure}

In Fig. \ref{Com4_Throughput.Fig}, we investigate the average reward of the proposed MARL algorithm by comparing it to the matching theory based resource allocation algorithm (Match).  In Fig. \ref{Com4_Throughput.Fig}, we consider the same setup as in Fig. \ref{Com_1.Fig}  but with $J=1$ for the simiplicity of algorithm implementation, which indicates that the UAV's action only contains the user selection for each time slot.  Furthermore, we consider  complete information exchanges among UAVs  are performed in the matching theory based user selection algorithm, that is each UAV knows other UAVs' action before making its own decision. comparisons, in the  matching theory based user selection procedure, we adopt the Gale-Shapley (GS) algorithm  \cite{gale1962college} at each time slot. Moreover, we also consider the performance of the randomly user selection algorithm (Rand) as a baseline scheme in Fig. \ref{Com4_Throughput.Fig}. As can be observed that from \ref{Com4_Throughput.Fig}, the achieved average reward of the matching based user selection algorithm  outperforms that of the  proposed MARL algorithm. This is because there is not information exchanges in the proposed MARL algorithm. In this case, each UAV cannot observe the other UAVs' information such as rewards and decisions, and thus it makes its decision independently. Moreover, it can be observed from  Fig. \ref{Com4_Throughput.Fig},  the average reward for the randomly user selection algorithm is lower than that of the proposed MARL algorithm. This is because of  the randomness of user selections, it cannot exploit the observed information effectively. As a result, the proposed MARL algorithm can achieve a tradeoff between reducing the information exchange overhead  and improving the system performance. 

\begin{figure} [!t]
\centering
\includegraphics[width= 3.5in, height=2.8in]{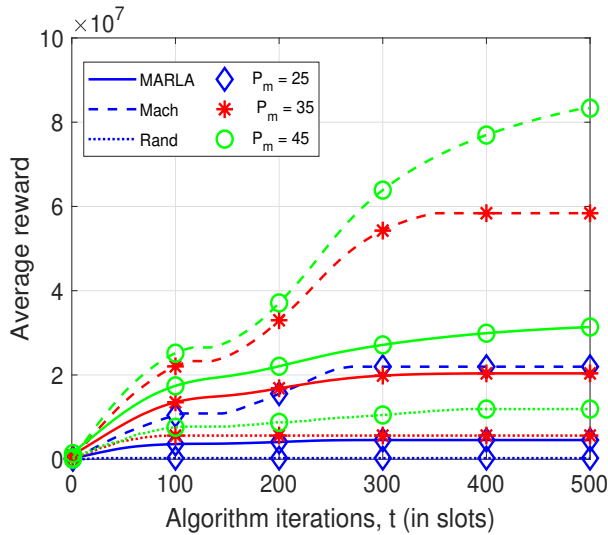}
 \caption{Comparisons of average rewards among different algorithms, where $M=2$, $K=1$, $J=1$ and $L=100$.}\label{Com4_Throughput.Fig}
\end{figure}

In Fig. \ref{3D.Fig}, we investigate the average reward as a function of  algorithm iterations  and the UAV's speed, where a UAU from a random initial location in the disc edge, flies over the disc along a direct line across the disc center with different speeds. The setup in Fig. \ref{3D.Fig} is the same as that in Fig. \ref{Com_3.Fig}  but with $M=1$ and $K=1$ for illustrative purposes. As can be observed that for a fixed speed, the average reward increases monotonically with increasing the algorithm iterations. Besides, for a fixed time slot, the average reward of larger speeds increases faster than that with  smaller speeds when $t$ is smaller than 150. This is due to the randomness of the locations for  users and the UAV, at the start point the UAV may not find an appropriate user satisfying its QoS requirement. Fig. \ref{3D.Fig} also shows that  the achieved average reward decreases when  the speed increases at the end of  algorithm iterations. This is because that if the UAV flies with a high speed, it will  take less time to fly out the disc.  As a result, the UAV with higher speeds  has less serving time than that of slower speeds. 

\begin{figure} [!t]
\centering
\includegraphics[width= 3.5in, height=2.8in]{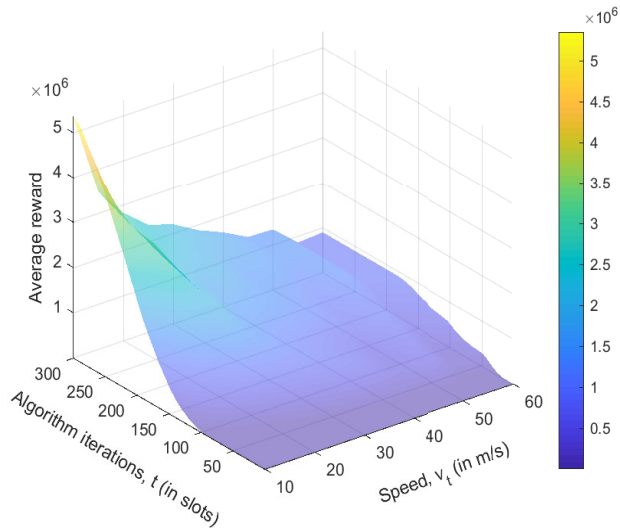}
 \caption{Average rewards with different time slots and speeds.}\label{3D.Fig}
\end{figure}

\section{Conclusions}

In this article, we  investigated the real-time designs of  resource allocation for multi-UAV downlink networks to maximize the long-term rewards. 
 Motivated by the uncertainty of environments, we proposed a stochastic game  formulation for the dynamic resource allocation problem of the considered multi-UAV networks, in which the goal of  each UAV was  to find a strategy of the resource allocation for maximizing its expected reward.  To overcome the  overhead of the information exchange and computation,   we developed an ILs based MARL algorithm to solve the  formulated stochastic game, where   all UAVs conducted a decision independently  based on Q-learning. Simulation results revealed that the proposed MARL based  resource allocation algorithm  for the multi-UAV networks  can attain a tradeoff between the information exchange overhead and the system performance. 
One promising extension of this work is to consider  more complicated joint  learning  algorithms for multi-UAV  networks with the partial information exchanges, that is the need of cooperation.
 Moreover,  incorporating the optimization of deployment and trajectories of UAVs into multi-UAV networks is capable of further improving energy efficiency of   multi-UAV networks, which is another promising future research direction.

\numberwithin{equation}{section}
\section*{Appendix~A: Proof of Proposition \ref{Pro:StateValuFunc}} \label{Appdx1}
\renewcommand{\theequation}{A.\arabic{equation}}
\setcounter{equation}{0}
Here, we show that the state values for one UAV $m$ over time in  \eqref{LTRewardEx.eq2}.  For one UAV $m$ with state $s_m \in \mathcal{S}_m$ at time step $t$, its state value function can be expressed as
\begin{eqnarray} \label{Proof.LTRewardEx.eq2}
\begin{aligned}
&V(s_m,\pi)=E\left\lbrace \sum_{\tau=0}^{+\infty} \delta^{\tau} r_m^{t+\tau+1}| s_m^t=s_m  \right\rbrace \\
&=E\left\lbrace r_m^{t+1} + \delta \sum_{\tau=0}^{+\infty} \delta^{\tau} r_m^{t+\tau+2}| s_m^t=s_m  \right\rbrace \\ 
&= E\left\lbrace r_m^{t+1} | s_m^t=s_m  \right\rbrace + \delta E\left\lbrace   \sum_{\tau=0}^{+\infty} \delta^{\tau} r_m^{t+\tau+2}| s_m^t=s_m  \right\rbrace,
\end{aligned}
\end{eqnarray}
where the first part and the second part represent the expected value and the state value function, respectively, at time $t+1$  over the state space and the action space. Next we show the relationship between the first part and the reward function $R(s_m,\theta,s_m')$ with $s_m^t = s_m,~\theta_m^t = \theta$ and $s_m^{t+1}=s'_m$. 
\begin{eqnarray}\label{SVPart1.eq}
\begin{aligned}
&E\left\lbrace r_m^{t+1} | s_m^t=s_m  \right\rbrace \\
&=\sum_{s'_m \in \mathcal{S}_m} F(s_m,\theta,s'_m)\sum_{\theta \in \Theta} \prod_{j\in \mathcal{M}} \pi_j(s_j,\theta_j) \times  E \left\lbrace  r^{t+1} | s_m^t = s_m, \theta_m^t = \theta_m, s_m^{t+1}=s'_m \right\rbrace \\
&=\sum_{s'_m \in \mathcal{S}_m} F(s_m,\theta,s'_m)\sum_{\theta \in \Theta} \prod_{j\in \mathcal{M}} \pi_j(s_j,\theta_j) R_m(s_m,\theta,s'_m),
\end{aligned}
\end{eqnarray}
where the definition of $R_m(s_m,\theta,s'_m)$ has been used  to obtain the final step. Similarly, the second part can be transformed into 
\begin{eqnarray}\label{SVPart2.eq}
\begin{aligned}
&E\left\lbrace   \sum_{\tau=0}^{+\infty} \delta^{\tau} r_m^{t+\tau+2}| s_m^t=s_m  \right\rbrace \\
&=\sum_{s'_m \in \mathcal{S}_m} F(s_m,\theta,s'_m)\sum_{\theta \in \Theta} \prod_{j\in \mathcal{M}} \pi_j(s_j,\theta_j) \times  E \left\lbrace  \sum_{\tau=0}^{+\infty} \delta^{\tau} r_m^{t+\tau+2}| s_m^t=s_m,\theta_m^t = \theta_m, s_m^{t+1}=s'_m  \right\rbrace\\
&=\sum_{s'_m \in \mathcal{S}_m} F(s_m,\theta,s'_m)\sum_{\theta \in \Theta} \prod_{j\in \mathcal{M}} \pi_j(s_j,\theta_j) V(s'_m,\pi).
\end{aligned}
\end{eqnarray}
Substituting \eqref{SVPart1.eq} and \eqref{SVPart2.eq} into \eqref{Proof.LTRewardEx.eq2}, we get
\begin{eqnarray}
\begin{aligned}
&V(s_m,\pi) = \sum_{s'_m \in \mathcal{S}_m} F(s_m,\theta,s'_m)\sum_{\theta \in \Theta} \prod_{j\in \mathcal{M}} \pi_j(s_j,\theta_j) \times  \left[  R_m(s_m,\theta,s'_m) + \delta V(s'_m,\pi) \right].
\end{aligned}
\end{eqnarray}
Thus, {\bf Proposition \ref{Pro:StateValuFunc}} is proved.

\numberwithin{equation}{section}
\section*{Appendix~B: Proof of Theorem \ref{Theo:ConvergenceQ}} \label{Appdx2}
\renewcommand{\theequation}{B.\arabic{equation}}
\setcounter{equation}{0}

The proof of {\bf Theorem \ref{Theo:ConvergenceQ}} follows from the idea in \cite{jaakkola1994convergence,melo2001convergence}. Here we give a more general procedure for  {\bf Algorithm  \ref{alg:MARL-QL}}. 
Note that the Q-learning algorithm is a stochastic form of value iteration \cite{jaakkola1994convergence}, which can be observed  from \eqref{Q-func.eq2} and \eqref{Q-func.eq3}. That is to perform a step of value iteration requires knowing the expected reward and the transition probabilities.  Therefore, to prove the convergence of the Q-learning algorithm,  stochastic approximation theory is applied. We first introduce a result of stochastic approcximation given in \cite{jaakkola1994convergence}.
\begin{lemma}\label{Lemma:StoApproxi}
A random iterative process $\bigtriangleup^{t+1}(x)$, which is defined as
\begin{align}
\bigtriangleup^{t+1}(x) =  (1-\alpha^t(x))\bigtriangleup^t(x) + \beta^t(x)\Psi^t(x),
\end{align}  
converges to zero w.p.1 if and only if the following conditions are satisfied.
\begin{enumerate}
\item The state space is finite;
\item $\sum_{t=0}^{+\infty}\alpha^t = \infty$, $\sum_{t=0}^{+\infty}(\alpha^t)^2 < \infty$, $\sum_{t=0}^{+\infty}\beta^t = \infty$, $\sum_{t=0}^{+\infty}(\beta^t)^2 < \infty$, and $E\{\beta^{t}(x)|\Lambda^{t} \} \leq E\{\alpha^{t}(x)|\Lambda^{t} \}$ uniformly w.p. 1;
\item $\|E\{\Psi^{t}(x)|\Lambda^{t} \}\|_{W} \leq \varrho \|\bigtriangleup^{t} \|_{W}$, where $\varrho \in (0,1)$;
\item $\mathrm{Var}\{\Psi^{t}(x)|\Lambda^{t} \} \leq C (1 + \|\bigtriangleup^{t} \|_{W})^2$, where $C>0$ is a constant.
\end{enumerate}
 Note that  $\Lambda^{t} = \{\bigtriangleup^{t},\bigtriangleup^{t-1},\cdots, \Psi^{t-1},\cdots,\alpha^{t-1},\cdots,\beta^{t-1}\}$ denotes the past at time slot $t$. $\| \cdot \|_W$ denotes some weighted maximum norm. 
\end{lemma}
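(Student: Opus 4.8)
The plan is to follow the stochastic-approximation argument of \cite{jaakkola1994convergence,melo2001convergence} and establish the substantive direction, namely that conditions (1)--(4) imply $\bigtriangleup^t(x)\to 0$ w.p.1 (this is the direction invoked in the convergence proof of {\bf Theorem \ref{Theo:ConvergenceQ}}). The idea is to split the driving term $\Psi^t(x)$ into its conditional mean and a martingale-difference noise, and then show that the contraction supplied by condition (3) drives $\bigtriangleup^t$ to zero while the accumulated noise vanishes. First I would decompose $\Psi^t(x) = E\{\Psi^t(x)\mid\Lambda^t\} + w^t(x)$, where $w^t(x) := \Psi^t(x) - E\{\Psi^t(x)\mid\Lambda^t\}$ satisfies $E\{w^t(x)\mid\Lambda^t\}=0$, so that $\{w^t(x)\}$ is a martingale-difference sequence relative to the filtration generated by the history $\Lambda^t$. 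Substituting into the recursion gives
\begin{equation}
\bigtriangleup^{t+1}(x) = (1-\alpha^t(x))\bigtriangleup^t(x) + \beta^t(x)\,E\{\Psi^t(x)\mid\Lambda^t\} + \beta^t(x)\,w^t(x),
\end{equation}
which separates a contractive drift from a zero-mean perturbation.

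Next I would treat the two pieces. For the drift, condition (3) gives $\|E\{\Psi^t(\cdot)\mid\Lambda^t\}\|_W \le \varrho\,\|\bigtriangleup^t\|_W$ with $\varrho<1$, so in the noise-free case the update is a $\varrho$-contraction in the weighted max-norm; combined with $E\{\beta^t\mid\Lambda^t\}\le E\{\alpha^t\mid\Lambda^t\}$ from condition (2), the effective one-step factor is bounded by $1-\alpha^t(1-\varrho)<1$, which together with $\sum_t \alpha^t = \infty$ forces the aggregate product $\prod_\tau(1-\alpha^\tau(1-\varrho))$ to vanish. For the noise, I would bound the increments using condition (4), $\mathrm{Var}\{\Psi^t(x)\mid\Lambda^t\}\le C(1+\|\bigtriangleup^t\|_W)^2$, together with $\sum_t(\beta^t)^2<\infty$: the conditional second moments of $\beta^t w^t$ are then summable relative to the current iterate scale, so by the martingale convergence theorem (equivalently the Robbins--Siegmund supermartingale lemma) the accumulated perturbation $\sum_\tau \beta^\tau w^\tau(x)$ converges a.s.\ and does not obstruct convergence.

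To combine the two contributions I would introduce an auxiliary scalar comparison process dominating $\|\bigtriangleup^t\|_W$, of the form $\delta^{t+1} \le (1-\alpha^t)\delta^t + \beta^t(\varrho\,\delta^t + \xi^t)$, where $\xi^t$ collects the a.s.\ vanishing noise, and show $\delta^t\to 0$. Because the state space is finite by condition (1), the weighted max-norm is a maximum over finitely many coordinates, so the per-coordinate estimates and the uniform-in-$x$ hypotheses apply simultaneously and the maximum stays controlled; this reduces the vector recursion to the scalar comparison process and yields $\bigtriangleup^t(x)\to 0$ w.p.1 for each $x$, hence $\|\bigtriangleup^t\|_W\to 0$.

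The main obstacle is the feedback between the noise magnitude and the iterate: condition (4) bounds the conditional variance only by $C(1+\|\bigtriangleup^t\|_W)^2$, so the perturbation scale is not absolute but grows with the very quantity I am trying to control, creating a circular dependence. I would break this circle with the standard two-stage argument --- first use $\sum_t(\beta^t)^2<\infty$ together with a stopping-time/boundedness estimate to show that $\bigtriangleup^t$ remains a.s.\ bounded, and then, conditioning on a bounded sample path, the variance becomes uniformly bounded, so the problem collapses to the classical contraction-plus-vanishing-noise setting treated above, at which point the convergence $\bigtriangleup^t\to 0$ follows.
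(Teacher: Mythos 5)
Your proposal cannot be matched against the paper's own argument for a simple reason: the paper does not prove this lemma at all. It is quoted verbatim as an imported result of stochastic approximation theory, attributed to \cite{jaakkola1994convergence}, and then used as a black box in Appendix~B to establish {\bf Theorem \ref{Theo:ConvergenceQ}}. What you have written is in effect a reconstruction of the proof from the cited literature, and it follows the standard route: split $\Psi^t$ into its conditional mean plus a martingale-difference term, use condition (3) to get a $\varrho$-contraction of the drift in the weighted maximum norm, use condition (4) together with $\sum_t(\beta^t)^2<\infty$ and a supermartingale (Robbins--Siegmund) argument to make the accumulated noise vanish, and break the circular dependence between the noise scale and $\|\bigtriangleup^t\|_W$ by first proving almost-sure boundedness and only then running the contraction-plus-vanishing-noise argument on bounded sample paths. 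This is essentially the Jaakkola--Jordan--Singh (and Bertsekas--Tsitsiklis) argument, and as a proof plan it is sound; what the paper's citation buys is brevity, while your sketch makes the result self-contained. Two remarks. First, you prove only the sufficiency direction, and rightly so: the ``if and only if'' in the paper's statement is an overstatement of the cited theorem, which is a sufficiency result --- necessity is simply false (take $\Psi^t\equiv 0$ and $\alpha^t\equiv 1$: the process converges to zero while condition (2) fails), so your restriction is a correction rather than a gap. Second, your one-step factor $1-\alpha^t(1-\varrho)$ is obtained by treating $\beta^t\le\alpha^t$ as a pointwise inequality, whereas condition (2) only bounds conditional expectations; this is immediate in the case $\beta^t=\alpha^t$ actually used in Appendix~B, but in the general statement it requires the more careful bookkeeping of \cite{jaakkola1994convergence}, where the contracted mean part and the noise part are tracked as two separate auxiliary processes.
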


Based on the results given in {\bf Lemma \ref{Lemma:StoApproxi}}, we now prove {\bf Theorem \ref{Theo:ConvergenceQ}} as follows.

Note that the Q-learning update equation in \eqref{Q_update.eq} can be rearranged as 
\begin{eqnarray} \label{Q_updateRe1.eq}
\begin{aligned}
Q_m^{t+1}(s_m,\theta_m) =& (1 - \alpha^t) Q_m^{t}(s_m,\theta_m) +   \alpha^t   \big\lbrace r_m^t  + \delta \max_{\theta'_m \in \Theta_m} Q_m^{t}(s'_m,\theta'_{m})  \big\rbrace. 
\end{aligned}
\end{eqnarray}\noindent
By subtracting $Q_m^*(s_m,\theta_m)$  from both side of \eqref{Q_updateRe1.eq}, we have 
\begin{eqnarray}\label{Q_updateRe2.eq}
\begin{aligned}
&\bigtriangleup_m^{t+1}(s_m,\theta_m) = (1 - \alpha^t) \bigtriangleup_m^{t}(s_m,\theta_m) + \alpha^t   \delta \Psi^{t}(s_m,\theta_m) ,
\end{aligned}
\end{eqnarray}
where 
\begin{align}
&\bigtriangleup_m^t(s_m,\theta_m) =Q_m^{t}(s_m,\theta_m)- Q_m^*(s_m,\theta_m), \label{Triangle.eq}\\
&\Psi_m^t(s_m,\theta_m)= r_m^t + \delta \max_{\theta'_m \in \Theta_m} Q_m^{t}(s'_m,\theta'_{m})  - Q_m^*(s_m,\theta_m) . \label{Psi.eq}
\end{align}
Therefore, the Q-learning algorithm can be seen as the random process of  {\bf Lemma \ref{Lemma:StoApproxi}} with $\beta^t = \alpha^t$.

Next we prove that the $\Psi^{t}(s_m,\theta_m)$ has the properties of 3) and 4) in {\bf Lemma \ref{Lemma:StoApproxi}}. 
We start by showing that $\Psi^{t}(s_m,\theta_m)$ is a contraction mapping  with respect to some maximum norm. 

\begin{definition}
For a set $\mathcal{X}$, a mapping $\mathbf{H}:~\mathcal{X} \rightarrow \mathcal{X}$ is a contraction mapping, or  contraction, if there exists a constant $\delta$, with $delta \in (0,1)$, such that  
\begin{eqnarray}
\begin{aligned}
\|\mathbf{H}x_1- \mathbf{H}x_2 \| \leq \delta \|x_1 - x_2 \|,
\end{aligned}
\end{eqnarray}
for any $x_1,x_2 \in \mathcal{X}$.
\end{definition}

\begin{proposition}\label{Pro:CMT}
There exists a  contraction mapping $\mathbf{H}$  for the function $q$ with the form of the optimal Q-function in \eqref{Q-func.eq4}. That is 
\begin{eqnarray}\label{CMT.eq}
\begin{aligned}
&\|\mathbf{H}q_1(s_m,\theta_m)- \mathbf{H}q_2(s_m,\theta_m) \|_{\infty}  \leq \delta \|q_1(s_m,\theta_m) - q_2(s_m,\theta_m) \|_{\infty},
\end{aligned}
\end{eqnarray}
\begin{proof}
From \eqref{Q-func.eq3},  the optimal Q-function for {\bf Algorithm \ref{alg:MARL-QL}} can be expressed as 
\begin{eqnarray}\label{Q-func.eq4}
\begin{aligned}
Q_m^*(s_m,\theta_m)& = \sum_{s'_m} F(s_m,\theta_m,s'_m) \times  \big[R(s_m,\theta_m,s'_m) + \delta \max_{\theta'_m} Q_m^*(s'_m,\theta'_m) \big].
\end{aligned}
\end{eqnarray}\noindent
Hence, we have 
\begin{eqnarray}\label{HQ-func.eq}
\begin{aligned}
\mathbf{H}q(s_m,\theta_m) &= \sum_{s'_m} F(s_m,\theta_m,s'_m) \times  \big[R(s_m,\theta_m,s'_m) + \delta \max_{\theta'_m} q(s'_m,\theta'_m) \big].
\end{aligned} 
\end{eqnarray}\noindent
To obtain \eqref{CMT.eq}, we make the following calculations  in \eqref{CMTforQ.eq}. Note that the definition of $q$ is used in  (a), (b) and (c) follows  properties of  absolute value inequalities. Moreover, (d) comes from the definition of  infinity norm  and (e) is based on the maximum calculation. 
\newcounter{mytempeqncnt}
\begin{figure*}[ht]
\normalsize
\setcounter{mytempeqncnt}{\value{equation}}
\setcounter{equation}{9} 
{\small
\begin{eqnarray}\label{CMTforQ.eq}
\begin{aligned}
&\|\mathbf{H}q_1(s_m,\theta_m)- \mathbf{H}q_2(s_m,\theta_m) \|_{\infty} 
\stackrel{\mathrm{(a)}}{=} \max_{s_m,\theta_m} \delta \bigg|\sum_{s'_m} F(s_m,\theta_m,s'_m)  \big[ \max_{\theta'_m} q_1(s'_m,\theta'_m) -  \max_{\theta'_m} q_2(s'_m,\theta'_m)\big]\bigg|\\
&\stackrel{\mathrm{(b)}}{\leq} \max_{s_m,\theta_m} \delta \sum_{s'_m} F(s_m,\theta_m,s'_m)  \bigg| \max_{\theta'_m} q_1(s'_m,\theta'_m) -  \max_{\theta'_m} q_2(s'_m,\theta'_m)\bigg| \\
&\stackrel{\mathrm{(c)}}{\leq}  \max_{s_m,\theta_m} \delta \sum_{s'_m} F(s_m,\theta,s'_m) \max_{\theta'_m} \bigg| q_1(s'_m,\theta'_m) -   q_2(s'_m,\theta'_m)\bigg| \\
&\stackrel{\mathrm{(d)}}{=} \max_{s_m,\theta_m} \delta  \sum_{s'_m} F(s_m,\theta,s'_m)  \| q_1(s'_m,\theta'_m) -   q_2(s'_m,\theta'_m)\|_{\infty}  \stackrel{\mathrm{(e)}}{=} \delta   \| q_1(s'_m,\theta'_m) -   q_2(s'_m,\theta'_m)\|_{\infty} 
\end{aligned}
\end{eqnarray}}
\vspace*{-10pt} 
\end{figure*}
\setcounter{equation}{10}

\end{proof}
\end{proposition}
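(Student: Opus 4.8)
The plan is to verify the contraction inequality \eqref{CMT.eq} directly from the explicit form of the operator $\mathbf{H}$ in \eqref{HQ-func.eq}, which I read off from the Bellman optimality equation \eqref{Q-func.eq3}--\eqref{Q-func.eq4}. First I would form the difference $\mathbf{H}q_1(s_m,\theta_m) - \mathbf{H}q_2(s_m,\theta_m)$ and observe that the immediate-reward term $R(s_m,\theta_m,s'_m)$ does not depend on the argument $q$, so it cancels identically; what survives is the discounted, transition-weighted difference of the one-step look-ahead values, namely $\delta \sum_{s'_m} F(s_m,\theta_m,s'_m)\big[\max_{\theta'_m} q_1(s'_m,\theta'_m) - \max_{\theta'_m} q_2(s'_m,\theta'_m)\big]$. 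Taking the supremum norm $\max_{s_m,\theta_m}$ over this difference and factoring out the discount $\delta$ then produces step (a) of the chain in \eqref{CMTforQ.eq}.

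Next I would pass the absolute value inside the sum over successor states, which is legitimate by the triangle inequality together with the nonnegativity $F(s_m,\theta_m,s'_m) \geq 0$ of the transition probabilities; this gives step (b). The crux, step (c), is the nonexpansiveness of the maximization operator: for any two bounded functions the elementary bound $|\max_{\theta'_m} q_1(s'_m,\theta'_m) - \max_{\theta'_m} q_2(s'_m,\theta'_m)| \leq \max_{\theta'_m}|q_1(s'_m,\theta'_m) - q_2(s'_m,\theta'_m)|$ holds. I would justify this in passing --- letting $\theta^{\star}$ attain the first maximum gives $\max_{\theta'_m} q_1 - \max_{\theta'_m} q_2 \leq q_1(\theta^{\star}) - q_2(\theta^{\star}) \leq \max_{\theta'_m}|q_1 - q_2|$, and the symmetric choice bounds the quantity from below --- since it is the only genuinely nontrivial estimate in the whole argument.

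Finally I would identify the inner maximum with the supremum norm, $\max_{\theta'_m}|q_1(s'_m,\theta'_m) - q_2(s'_m,\theta'_m)| \leq \|q_1 - q_2\|_{\infty}$, which is step (d), and then invoke the stochasticity of the transition kernel guaranteed by the Markov-chain definition, $\sum_{s'_m} F(s_m,\theta_m,s'_m) = 1$. Since a convex combination of terms each bounded by $\|q_1 - q_2\|_{\infty}$ is again bounded by $\|q_1 - q_2\|_{\infty}$, the outer $\max_{s_m,\theta_m}$ becomes vacuous, giving step (e), and the retained prefactor $\delta \in (0,1)$ delivers exactly \eqref{CMT.eq}, so $\mathbf{H}$ is a contraction. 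The main obstacle is solely the max-nonexpansiveness inequality of step (c); every remaining manipulation is routine bookkeeping with probabilities that sum to one.
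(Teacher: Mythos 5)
Your proposal is correct and follows essentially the same route as the paper's proof: the reward term cancels, the triangle inequality with nonnegative transition probabilities gives step (b), max-nonexpansiveness gives step (c), and stochasticity of $F$ collapses the sum in step (e). If anything, yours is slightly more careful than the paper's --- you actually prove the nonexpansiveness bound $|\max_{\theta'_m} q_1 - \max_{\theta'_m} q_2| \leq \max_{\theta'_m}|q_1 - q_2|$ rather than citing ``properties of absolute value inequalities,'' and you correctly state step (d) as an inequality $\max_{\theta'_m}|q_1 - q_2| \leq \|q_1 - q_2\|_{\infty}$ where the paper writes an equality.
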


Based on \eqref{Psi.eq} and \eqref{HQ-func.eq}, 
\begin{eqnarray}\label{Psi-func.eq1}
\begin{aligned}
E\{\Psi^{t}(s_m,\theta_m)\} &= \sum_{s'_m} F(s_m,\theta,s'_m) \times  \big[r_m^t + \delta \max_{\theta'_m \in \Theta_m} Q_m^{t}(s'_m,\theta'_{m})  - Q_m^*(s_m,\theta_m) \big] \\
& = \mathbf{H}Q^t_m(s_m,\theta_m) -  Q_m^*(s_m,\theta_m)\\
& = \mathbf{H}Q^t_m(s_m,\theta_m) - \mathbf{H} Q_m^*(s_m,\theta_m).
\end{aligned} 
\end{eqnarray}\noindent
where we have used the fact that $Q_m^*(s_m,\theta_m) = \mathbf{H}Q^*_m(s_m,\theta_m) $ since  $Q_m^*(s_m,\theta_m)$  is a some constant value. As a result, we can obtain from {\bf Proposition \ref{Pro:CMT}} and \eqref{Triangle.eq} that
\begin{eqnarray}\label{Psi-func.eq2}
\begin{aligned}
\|E\{\Psi^{t}(s_m,\theta_m)\}\|_{\infty} &\leq \delta \|Q^t_m(s_m,\theta_m) -  Q_m^*(s_m,\theta_m) \|_{\infty} \\&= \delta \| \bigtriangleup_m^t(s_m,\theta_m) \|_{\infty},
\end{aligned}
\end{eqnarray}
Note that \eqref{Psi-func.eq2} corresponds to condition 3) of {\bf Lemma \ref{Lemma:StoApproxi}} in the form of  infinity norm. 

Finally, we verify  the condition in 4) of {\bf Lemma \ref{Lemma:StoApproxi}} is satisfied.
\begin{eqnarray}
\begin{aligned}
&\mathrm{Var}\{\Psi^{t}(s_m,\theta_m) \} \\
&= E\{ r_m^t + \delta \max_{\theta'_m \in \Theta_m} Q_m^{t}(s'_m,\theta'_{m})  - Q_m^*(s_m,\theta_m) -  \mathbf{H}Q^t_m(s_m,\theta_m) +  Q_m^*(s_m,\theta_m) \}\\
& = E\{ r_m^t + \delta \max_{\theta'_m \in \Theta_m} Q_m^{t}(s'_m,\theta'_{m})  -  \mathbf{H}Q^t_m(s_m,\theta_m)  \}\\
& = \mathrm{Var}\{ r_m^t + \delta \max_{\theta'_m \in \Theta_m} Q_m^{t}(s'_m,\theta'_{m})  \}\\
&\leq C(1+ \| \bigtriangleup_m^t(s_m,\theta_m)\|_W^2),
\end{aligned}
\end{eqnarray}
where $C$ is some constant. The final step is based on the fact that the variance of $r_m^t$ is bounded and $Q_m^{t}(s'_m,\theta'_{m})$ at most linearly.

Therefore,  $\|\bigtriangleup_m^t(s_m,\theta_m)\|$ converges to zero w.p.1 in {Lemma \ref{Lemma:StoApproxi}}, which indicates $Q_m^{t}(s_m,\theta_{m})$ converges to $Q_m^*(s_m,\theta_{m})$ w.p.1 in {\bf Theorem \ref{Theo:ConvergenceQ}}.

\vspace{-0.3cm}
{\small
 \bibliographystyle{IEEEtran}
  \linespread{1.1}\selectfont
\bibliography{myref}

\begin{thebibliography}{10}
\providecommand{\url}[1]{#1}
\csname url@samestyle\endcsname
\providecommand{\newblock}{\relax}
\providecommand{\bibinfo}[2]{#2}
\providecommand{\BIBentrySTDinterwordspacing}{\spaceskip=0pt\relax}
\providecommand{\BIBentryALTinterwordstretchfactor}{4}
\providecommand{\BIBentryALTinterwordspacing}{\spaceskip=\fontdimen2\font plus
\BIBentryALTinterwordstretchfactor\fontdimen3\font minus
  \fontdimen4\font\relax}
\providecommand{\BIBforeignlanguage}[2]{{%
\expandafter\ifx\csname l@#1\endcsname\relax
\typeout{** WARNING: IEEEtran.bst: No hyphenation pattern has been}%
\typeout{** loaded for the language `#1'. Using the pattern for}%
\typeout{** the default language instead.}%
\else
\language=\csname l@#1\endcsname
\fi
#2}}
\providecommand{\BIBdecl}{\relax}
\BIBdecl

\bibitem{Bucaille13Milcom}
I.~Bucaille, S.~Héthuin, A.~Munari, R.~Hermenier, T.~Rasheed, and S.~Allsopp,
  ``Rapidly deployable network for tactical applications: Aerial base station
  with opportunistic links for unattended and temporary events {ABSOLUTE}
  example,'' in \emph{{IEEE} Proc. of Mil. Commun. Conf. ({MILCOM})}, Nov.
  2013, pp. 1116--1120.

\bibitem{Chandrasekharan16Mag}
S.~Chandrasekharan, K.~Gomez, A.~Al-Hourani, S.~Kandeepan, T.~Rasheed,
  L.~Goratti, L.~Reynaud, D.~Grace, I.~Bucaille, T.~Wirth, and S.~Allsopp,
  ``Designing and implementing future aerial communication networks,''
  \emph{{IEEE} Commun. Mag.}, vol.~54, no.~5, pp. 26--34, May 2016.

\bibitem{Hourani14WCL}
A.~Al-Hourani, S.~Kandeepan, and S.~Lardner, ``Optimal {LAP} altitude for
  maximum coverage,'' \emph{{IEEE} Wireless Commun. Lett.}, vol.~3, no.~6, pp.
  569--572, Dec. 2014.

\bibitem{Mozaffari18Tutorial}
\BIBentryALTinterwordspacing
M.~Mozaffari, W.~Saad, M.~Bennis, Y.~Nam, and M.~Debbah, ``A tutorial on {UAVs}
  for wireless networks: Applications, challenges, and open problems,''
  \emph{CoRR}, vol. abs/1803.00680, 2018. [Online]. Available:
  \url{http://arxiv.org/abs/1803.00680}
\BIBentrySTDinterwordspacing

\bibitem{Mozaffari15aDBLP}
\BIBentryALTinterwordspacing
M.~Mozaffari, W.~Saad, M.~Bennis, and M.~Debbah, ``Drone small cells in the
  clouds: Design, deployment and performance analysis,'' \emph{CoRR}, vol.
  abs/1509.01655, 2015. [Online]. Available:
  \url{http://arxiv.org/abs/1509.01655}
\BIBentrySTDinterwordspacing

\bibitem{Mozaffari16CL}
------, ``Efficient deployment of multiple unmanned aerial vehicles for optimal
  wireless coverage,'' \emph{{IEEE} Commun. Lett.}, vol.~20, no.~8, pp.
  1647--1650, Aug. 2016.

\bibitem{Alzenad17WCL}
M.~Alzenad, A.~El-Keyi, F.~Lagum, and H.~Yanikomeroglu, ``3-{D} placement of an
  unmanned aerial vehicle base station ({UAV-BS}) for energy-efficient maximal
  coverage,'' \emph{{IEEE} Wireless Commun. Lett.}, vol.~6, no.~4, pp.
  434--437, Aug. 2017.

\bibitem{Lyu17CL}
J.~Lyu, Y.~Zeng, R.~Zhang, and T.~J. Lim, ``Placement optimization of
  {UAV}-mounted mobile base stations,'' \emph{{IEEE} Commun. Lett.}, vol.~21,
  no.~3, pp. 604--607, Mar. 2017.

\bibitem{Zeng16TC}
Y.~Zeng, R.~Zhang, and T.~J. Lim, ``Throughput maximization for {UAV}-enabled
  mobile relaying systems,'' \emph{{IEEE} Trans. Commun.}, vol.~64, no.~12, pp.
  4983--4996, Dec. 2016.

\bibitem{Zeng18TWC}
Y.~Zeng, X.~Xu, and R.~Zhang, ``Trajectory design for completion time
  minimization in {UAV}-enabled multicasting,'' \emph{{IEEE} Trans. Wireless
  Commun.}, vol.~17, no.~4, pp. 2233--2246, Apr. 2018.

\bibitem{Wu18TWC}
Q.~Wu, Y.~Zeng, and R.~Zhang, ``Joint trajectory and communication design for
  multi-{UAV} enabled wireless networks,'' \emph{{IEEE} Trans. Wireless
  Commun.}, vol.~17, no.~3, pp. 2109--2121, Mar. 2018.

\bibitem{Zhang2018arXiv}
S.~{Zhang}, H.~{Zhang}, B.~{Di}, and L.~{Song}, ``Cellular {UAV}-to-{X}
  communications: Design and optimization for multi-{UAV} networks in {5G},''
  2018.

\bibitem{geiger2009neural}
B.~Geiger and J.~Horn, ``Neural network based trajectory optimization for
  unmanned aerial vehicles,'' in \emph{47th AIAA Aerospace Sciences Meeting
  Including the New Horizons Forum and Aerospace Exposition}, 2009, p.~54.

\bibitem{Nodland11Cof}
D.~Nodland, H.~Zargarzadeh, and S.~Jagannathan, ``Neural network-based optimal
  control for trajectory tracking of a helicopter {UAV},'' in \emph{IEEE
  Conference on Decision and Control and European Control Conference}, Dec.
  2011, pp. 3876--3881.

\bibitem{ZhangQ2018Glb}
Q.~Zhang, M.~Mozaffari, W.~Saad, M.~Bennis, and M.~Debbah, ``Machine learning
  for predictive on-demand deployment of {UAVs} for wireless communications,''
  \emph{arXiv preprint arXiv:1805.00061}, 2018.

\bibitem{Challita18Janarxiv}
\BIBentryALTinterwordspacing
U.~Challita, W.~Saad, and C.~Bettstetter, ``Cellular-connected {UAVs} over
  {5G}: Deep reinforcement learning for interference management,'' \emph{CoRR},
  vol. abs/1801.05500, 2018. [Online]. Available:
  \url{http://arxiv.org/abs/1801.05500}
\BIBentrySTDinterwordspacing

\bibitem{Chen17Glb}
M.~Chen, W.~Saad, and C.~Yin, ``Liquid state machine learning for resource
  allocation in a network of cache-enabled {LTE-U} {UAVs},'' in \emph{IEEE
  Proc. of Global Commun. Conf. (GLOBECOM)}, Dec 2017, pp. 1--6.

\bibitem{Chen18Ac}
J.~Chen, Q.~Wu, Y.~Xu, Y.~Zhang, and Y.~Yang, ``Distributed demand-aware
  channel-slot selection for multi-{UAV} networks: A game-theoretic learning
  approach,'' \emph{IEEE Access}, vol.~6, pp. 14\,799--14\,811, 2018.

\bibitem{Sun13Milcom}
N.~Sun and J.~Wu, ``Minimum error transmissions with imperfect channel
  information in high mobility systems,'' in \emph{{IEEE} Proc. of Mil. Commun.
  Conf. ({MILCOM})}, Nov. 2013, pp. 922--927.

\bibitem{Cai13TVT}
Y.~Cai, F.~R. Yu, J.~Li, Y.~Zhou, and L.~Lamont, ``Medium access control for
  unmanned aerial vehicle {(UAV)} ad-hoc networks with full-duplex radios and
  multipacket reception capability,'' \emph{{IEEE} Trans. Veh. Technol.},
  vol.~62, no.~1, pp. 390--394, Jan. 2013.

\bibitem{Li09Cof}
H.~Li, ``Multi-agent {Q}-learning of channel selection in multi-user cognitive
  radio systems: A two by two case,'' in \emph{IEEE International Conference on
  Systems, Man and Cybernetics}, Oct. 2009, pp. 1893--1898.

\bibitem{Serrano10TVT}
A.~Galindo-Serrano and L.~Giupponi, ``Distributed {Q}-learning for aggregated
  interference control in cognitive radio networks,'' \emph{{IEEE} Trans. Veh.
  Technol.}, vol.~59, no.~4, pp. 1823--1834, May 2010.

\bibitem{Asheralieva16TC}
A.~Asheralieva and Y.~Miyanaga, ``An autonomous learning-based algorithm for
  joint channel and power level selection by {D2D} pairs in heterogeneous
  cellular networks,'' \emph{{IEEE} Trans. Commun.}, vol.~64, no.~9, pp.
  3996--4012, Sep. 2016.

\bibitem{Cui18NOMAML}
J.~Cui, Z.~Ding, P.~Fan, and N.~Al-Dhahir, ``Unsupervised machine learning
  based user clustering in {mmWave-NOMA} systems,'' \emph{{IEEE} Trans.
  Wireless Commun.}, to be published.

\bibitem{nowe2012game}
A.~Now{\'e}, P.~Vrancx, and Y.-M. De~Hauwere, ``Game theory and multi-agent
  reinforcement learning,'' in \emph{Reinforcement Learning}.\hskip 1em plus
  0.5em minus 0.4em\relax Springer, 2012, pp. 441--470.

\bibitem{neyman2003markov}
A.~Neyman, ``From {Markov} chains to stochastic games,'' in \emph{Stochastic
  Games and Applications}.\hskip 1em plus 0.5em minus 0.4em\relax Springer,
  2003, pp. 9--25.

\bibitem{how2004flight}
J.~How, Y.~Kuwata, and E.~King, ``Flight demonstrations of cooperative control
  for {UAV} teams,'' in \emph{AIAA 3rd" Unmanned Unlimited" Technical
  Conference, Workshop and Exhibit}, 2004, p. 6490.

\bibitem{Zheng14TWC}
J.~Zheng, Y.~Cai, Y.~Liu, Y.~Xu, B.~Duan, and X.~. Shen, ``Optimal power
  allocation and user scheduling in multicell networks: Base station
  cooperation using a game-theoretic approach,'' \emph{{IEEE} Trans. Wireless
  Commun.}, vol.~13, no.~12, pp. 6928--6942, Dec. 2014.

\bibitem{Uragun11PowCons}
B.~Uragun, ``Energy efficiency for unmanned aerial vehicles,'' in
  \emph{International Conference on Machine Learning and Applications and
  Workshops}, vol.~2, Dec. 2011, pp. 316--320.

\bibitem{shoham2008multiagent}
Y.~Shoham and K.~Leyton-Brown, \emph{Multiagent systems: Algorithmic,
  game-theoretic, and logical foundations}.\hskip 1em plus 0.5em minus
  0.4em\relax Cambridge University Press, 2008.

\bibitem{neyman2003stochastic}
A.~Neyman and S.~Sorin, \emph{Stochastic games and applications}.\hskip 1em
  plus 0.5em minus 0.4em\relax Springer Science \& Business Media, 2003, vol.
  570.

\bibitem{osborne1994course}
M.~J. Osborne and A.~Rubinstein, \emph{A course in game theory}.\hskip 1em plus
  0.5em minus 0.4em\relax MIT press, 1994.

\bibitem{suttonRL}
R.~S. Sutton and A.~G. Barto, \emph{Reinforcement learning: An
  introduction}.\hskip 1em plus 0.5em minus 0.4em\relax MIT press Cambridge,
  1998.

\bibitem{neto2005single}
G.~Neto, ``From single-agent to multi-agent reinforcement learning:
  Foundational concepts and methods,'' \emph{Learning theory course}, 2005.

\bibitem{matignon2012independent}
L.~Matignon, G.~J. Laurent, and N.~Le~Fort-Piat, ``Independent reinforcement
  learners in cooperative markov games: a survey regarding coordination
  problems,'' \emph{The Knowledge Engineering Review}, vol.~27, no.~1, pp.
  1--31, 2012.

\bibitem{jaakkola1994convergence}
T.~Jaakkola, M.~I. Jordan, and S.~P. Singh, ``Convergence of stochastic
  iterative dynamic programming algorithms,'' in \emph{Advances in neural
  information processing systems}, 1994, pp. 703--710.

\bibitem{koenig1992complexity}
S.~Koenig and R.~G. Simmons, ``Complexity analysis of real-time reinforcement
  learning applied to finding shortest paths in deterministic domains,''
  CARNEGIE-MELLON UNIV PITTSBURGH PA SCHOOL OF COMPUTER SCIENCE, Tech. Rep.,
  1992.

\bibitem{gale1962college}
D.~Gale and L.~S. Shapley, ``College admissions and the stability of
  marriage,'' \emph{The American Mathematical Monthly}, vol.~69, no.~1, pp.
  9--15, 1962.

\bibitem{melo2001convergence}
F.~S. Melo, ``Convergence of {Q}-learning: A simple proof,'' \emph{Institute Of
  Systems and Robotics, Tech. Rep}, pp. 1--4, 2001.

\end{thebibliography}
}
\end{document}